\begin{document}
% Theorem Family
\newtheorem{theorem}{Theorem}
\newtheorem{lemma}{Lemma}
\newtheorem{proposition}{Proposition}
\newtheorem{corollary}{Corollary}
\newtheorem{definition}{Definition}
\newtheorem{assumption}{Assumption}
\newtheorem{example}{Example}
\newtheorem{remark}{Remark}
\newtheorem{conjecture}{Conjecture}

% Big Fraction
\newcommand\ddfrac[2]{\frac{\displaystyle #1}{\displaystyle #2}}

\title{Kalman Filter Aided Federated Koopman Learning}

\author{Yutao Chen,~\IEEEmembership{Member, IEEE}, Wei Chen,~\IEEEmembership{Senior Member, IEEE}
    \thanks{This work is supported in part by the NSFC/RGC Joint Research Scheme under Grant No. 62261160390/N\_HKUST656/22 and in part by the National Natural Science Foundation of China under Grant No. 62471276. (\emph{Corresponding author: Wei Chen}.)
    }
    \thanks{Yutao Chen and Wei Chen are with the Department of Electronic Engineering, Tsinghua University, Beijing 100084, China, and also with the State Key Laboratory of Space Network and Communications, as well as, the Beijing National Research Center for Information Science and Technology (email: cheny1995@tsinghua.edu.cn, wchen@tsinghua.edu.cn).}
}

\maketitle

\begin{abstract}
    Real-time control and estimation are pivotal for applications such as industrial automation and future healthcare. The realization of this vision relies heavily on efficient interactions with nonlinear systems. Therefore, Koopman learning, which leverages the power of deep learning to linearize nonlinear systems, has been one of the most successful examples of mitigating the complexity inherent in nonlinearity. However, the existing literature assumes access to accurate system states and abundant high-quality data for Koopman analysis, which is usually impractical in real-world scenarios. To fill this void, this paper considers the case where only observations of the system are available and where the observation data is insufficient to accomplish an independent Koopman analysis. To this end, we propose Kalman Filter aided Federated Koopman Learning (KF-FedKL), which pioneers the combination of Kalman filtering and federated learning with Koopman analysis. By doing so, we can achieve collaborative linearization with privacy guarantees. Specifically, we employ a straightforward yet efficient loss function to drive the training of a deep Koopman network for linearization. To obtain system information devoid of individual information from observation data, we leverage the unscented Kalman filter and the unscented Rauch-Tung-Striebel smoother. To achieve collaboration between clients, we adopt the federated learning framework and develop a modified FedAvg algorithm to orchestrate the collaboration. A convergence analysis of the proposed framework is also presented. Finally, through extensive numerical simulations, we showcase the performance of KF-FedKL under various situations.
\end{abstract}

\begin{IEEEkeywords}
    Federated learning, Koopman learning, learning theory and algorithms, statistical signal processing, Kalman filtering, real-time estimation, nonlinear systems, privacy-preserving machine learning.
\end{IEEEkeywords}

\section{Introduction}
% Nonlinear system
The rapid development of the Internet of Things (IoT)~\cite{IOT2014, IOT2023} has led to the growing importance of efficient interactions with real-world physical systems, which are often described as nonlinear systems. These interactions span various domains, ranging from cyber-physical systems (CPS)~\cite{cyber2008} to haptic communications~\cite{Haptic2012}. To achieve efficient interactions, nonlinearity becomes the most significant obstacle. Therefore, the linearization of nonlinear systems has attracted significant attention~\cite{Linearization,mauroy2020koopman,koopman1}, and the Koopman operator theory~\cite{Koopman1931} has emerged as a promising theory. The research in Koopman analysis has primarily remained theoretical until the introduction of Koopman learning~\cite{NNKoopman2018, DeepKoopman2018}, which leverages the power of deep learning to achieve linearization. The primary advantages of deep learning-based approaches lie in the fact that they do not require explicit knowledge of the nonlinear system, and the framework is easily adaptable to various nonlinear systems.

% Federated Learning
Nevertheless, these approaches also face limitations, such as the need for sufficient and comprehensive data. To compensate for this type of limitation, data sharing is a straightforward solution. However, directly sharing raw data may raise security and privacy concerns. Therefore, we propose the integration of Koopman learning with Federated Learning (FL)~\cite{FedAve2017} to mitigate the adverse effects of unsatisfactory data quality and quantity. Enthusiasm for FL has persisted since it was proposed~\cite{FL2019,FLAsynchronous2019,FLPrivacy2017,FLLinear2016}, with research focusing on communication and optimization schemes, data security and privacy enhancements, and applications across diverse learning paradigms. In this paper, we adopt horizontal FL, where a central server orchestrates communications under a synchronous optimization scheme. An autoencoder-based neural network is trained to achieve linearization, and a model aggregation method is adopted to preserve privacy. For network training, we propose a loss function specifically designed for training under real-time data acquisition.

% Motivation and Use Case
The federated learning of the Koopman operator is motivated by various real-world problems, particularly in clinical applications. For instance, cardiac motion, which exhibits nonlinear dynamics, has significant physiological and clinical implications~\cite{tong2007sampled}. However, individual medical research institutions often struggle to collect sufficient data and may encounter biases due to geographic, ethnic, and environmental factors. To acquire diverse and ample data, collaboration is an efficient approach. Nevertheless, raw data sharing is often restricted due to privacy regulations and institutional confidentiality. In this context, federated learning has emerged as an effective solution~\cite{medicine4}. Through the collaborative linearization facilitated by the proposed framework, institutions can construct comprehensive models collaboratively while preserving the privacy of raw measurements and associated metadata.

% Kalman Filtering
Another essential assumption in the existing literature is the access to accurate system states from numerical solutions. However, this assumption often diverges from reality, where only observation data about the system is accessible. Although direct utilization of observation data for Koopman learning is feasible, the results will be confined solely to specific observation mechanisms. In response to this obstacle, we leverage the Kalman filter to estimate system states from observation data. The combination of the Koopman analysis and the Kalman filter is explored differently in~\cite{KKF2016, KKF2023} where the Koopman operator theory is used as a linearization tool to enable the application of the classical Kalman filter on nonlinear systems. The authors in~\cite{KalmanNet} present a Kalman filter based estimator for non-linear systems with partial information. In this paper, the Kalman filter is employed to estimate the system states from observation data, thereby effectively reducing the amount of individual-specific information embedded in the raw observations, such as the information embedded in the observation methods and capabilities. As a result, when the estimated states are used for linearization, the results contain limited individual information, which facilitates the implementation of federated learning and preserves client privacy when reporting the results to the central server.

% Main contribution
The main contributions and novelty of this paper are summarized as follows.
\begin{itemize}
    \item This paper presents the first attempt to combine federated learning and Koopman learning to address the adverse effects of data inadequacy while providing privacy guarantees. Through this pioneering effort, we expand the scope of applications for Koopman learning.
    \item We introduce the Kalman filter to estimate the system states from observation data, as opposed to the existing literature where the system's full states are obtained from numerical solutions and/or physical sensor measurements. The introduction of the Kalman filter enhances privacy protection and enables collaboration via federated learning since the estimates contain minimal individual information compared to the raw observation data. Through this approach, we further enhance the applicability of our results.
    \item We propose the Deep Koopman Network (DKN), which can linearize not only the dynamics of the system state but also a time-delay embedded representation of the system states. Moreover, we propose a loss function that penalizes only one-step prediction errors, thereby enhancing its practicality when data is collected in real-time.
    \item This paper differs from the existing literature in that we use the estimated system states as training data. In this case, it is crucial to fine-tune the network size and training policy to mitigate the impact of estimation errors. This paper offers valuable insights into the fine-tuning through comprehensive numerical simulations.
    \item We consider the case where the observation data arrives in a stochastic manner rather than being readily available. The stochastic nature of the arrival introduces additional challenges in optimizing the training policy. The impact of different training policies is explored numerically via extensive simulations.
\end{itemize}

% Paper structure
The remainder of this paper is organized as follows. Section~\ref{sec:SystemModel} details the system model. Section~\ref{sec:KalmanKoopman} introduces the Kalman filter for obtaining system information devoid of individual information and the Koopman operator theory for system linearization. In Section~\ref{sec:KFKL}, we present the Deep Koopman Network (DKN) for linearization, propose the framework for facilitating the federated learning of the Koopman operator, and provide a theoretical convergence analysis. The paper concludes with Section~\ref{sec:Numerical}, where extensive numerical results are presented and analyzed to showcase the performance.

\section{System Model}\label{sec:SystemModel}
We consider a scenario where multiple clients observe a common nonlinear system and aim to linearize it. The clients obtain system information through observations only. At the same time, the clients collaborate with the help of a central server to address the data insufficiency caused by adverse observation conditions. An illustration of the system model is provided in Fig.~\ref{fig:SystemModel}.
\begin{figure}[t]
    \centering
    \includegraphics[width=\columnwidth]{./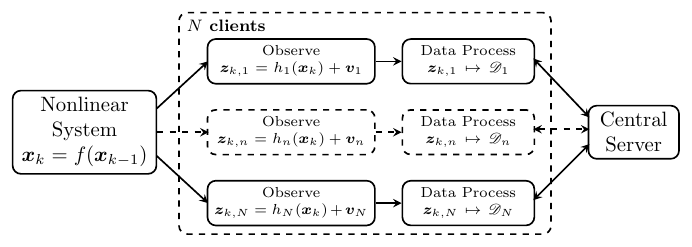}
    \caption{An illustration of the multi-client system model, where $\mathscr{D}_n$ is the processed data that is exchanged between client $n$ and the central server.}
    \label{fig:SystemModel}
\end{figure}

We investigate a system with $N$ clients and a central server. In the system, each client independently observes a common $d_x$-dimensional nonlinear system by sampling at a fixed rate of $1/\nu$. The sampling rate is assumed to be sufficiently high. Let $\bm{x}_k\in\mathbbm{R}^{d_x}$ denote the system state at time $t=k\nu$. Then, the dynamics of the nonlinear system are described by
\begin{equation}
    \bm{x}_{k} = f(\bm{x}_{k-1}),
\end{equation}
where $f:\mathbbm{R}^{d_x}\rightarrow\mathbbm{R}^{d_x}$ represents the nonlinear dynamics. However, obtaining an accurate mathematical description of a nonlinear system is challenging and also beyond the scope of this paper. To account for this difficulty, we introduce an independent random noise $\bm{s}_n\in\mathbbm{R}^{d_x}$ for each client to represent the uncertainty about the evolution of the system. More precisely, we have
\begin{equation}\label{eq:noisy_state}
    \bm{x}_{k} = f_n(\bm{x}_{k-1}) + \bm{s}_n,
\end{equation}
where $f_n$ is the estimated system dynamics at client $n$. We assume $\bm{s}_{n}$ follows the multivariate normal distribution $\mathcal{N}(\bm{0},\bm{\Sigma}_{f,n})$ and is independent across all clients. Moreover, we consider the case where the clients only partially observe the system state. Hence, we use a $d_z$-dimensional vector $\bm{z}_{k}\in\mathbbm{R}^{d_z}$ to denote the observed state corresponding to $\bm{x}_k$. The relationship between the observed state and the system state is captured by $h:\mathbbm{R}^{d_x}\rightarrow\mathbbm{R}^{d_z}$. The clients can adopt different $h$ to account for the variations in observation mechanisms. We also introduce a random noise $\bm{v}\in\mathbbm{R}^{d_z}$ to represent the uncertainty in the observations. Mathematically, we have
\begin{equation}
    \bm{z}_{k,n} = h_n(\bm{x}_{k}) + \bm{v}_n,
\end{equation}
where the subscript $n$ distinguishes clients. We assume that $\bm{v}_n$ follows the multivariate normal distribution $\mathcal{N}(\bm{0},\bm{\Sigma}_{h,n})$ and is independent across all clients. Note that the clients have no access to the system state $\bm{x}_k$. For the central server, it plays a dual role in the system. First, as a coordinator, the central server manages the collaboration among clients. Second, as a trusted third party, the clients can securely communicate with the central server. The central server operates discretely, meaning its internal time is slotted, and each action can only occur at the beginning of a time slot.

In many applications such as drone control, celestial dynamics analysis, and smart grid monitoring, linear representations of nonlinear systems are desirable for efficient analysis, control, and remote monitoring. However, traditional linearization methods often fail to capture global system behaviors, while data-driven approaches typically require large and diverse datasets. To address these limitations, collaborative learning among multiple clients offers a promising solution. A straightforward approach is direct data sharing. However, this raises significant privacy concerns. Therefore, this paper proposes a framework that enables collaborative global linearization in a privacy-preserving manner. To this end, we address the following key challenges. The first challenge is to achieve a linearization that minimizes the inclusion of client-specific information, such as that embedded in the observation function $h_n(\cdot)$ and the noise terms $\bm{s}_n$ and $\bm{v}_n$. Direct linearization based on local observation data inherently encodes client-specific information and the results are valid only for the originating client. Hence, we need to ensure that the linearization does not rely on client-specific information. The second challenge lies in enabling effective collaboration. Collaborative training is essential for improving generalization by leveraging more diverse data across clients. Although direct data sharing is a straightforward method for enabling collaboration, it compromises data privacy. Consequently, the central server must facilitate collaboration without compromising the privacy of individual clients.

\section{Kalman Filter Assisted Koopman learning}\label{sec:KalmanKoopman}
In this section, we focus on the first challenge outlined above. To achieve linearization that contains limited client-specific information, the initial step is to estimate the system states from the observation data. For this purpose, we leverage the Kalman filter's robust capability in estimating system states from noisy observations. Then, we employ the Koopman learning for linearization using the estimated system states.

\subsection{Kalman Filter for State Estimation}\label{sec:Kalman}
The Kalman filter effectively estimates the system states from observations in a way that minimizes the mean squared error. In this paper, we adopt the Unscented Kalman Filter~\cite{UKF1997} (UKF), which is an extension of the classical Kalman filter to nonlinear systems.

The cornerstone of UKF is the Unscented Transformation (UT), which is a method for approximating the statistical properties of a nonlinearly transformed random variable. For simplicity, we temporarily drop the subscript $n$ used to distinguish between clients. The key concept in UT is the sigma vectors. Specifically, for state $\bm{x}_k$, we construct a matrix $\mathscr{X}_k$ containing $2d_x+1$ sigma vectors. Let $[\cdot]_j$ denote the $j$th column of a matrix. The sigma vector $[\mathscr{X}_k]_j$ is defined as
\begin{equation}\label{eq:SigmaVector}
    \begin{split}
        [\mathscr{X}_{k}]_0       & \triangleq \bm{x}_{k},                                                                  \\
        [\mathscr{X}_{k}]_j       & \triangleq \bm{x}_{k} + \left[\sqrt{(d_x+\lambda)\bm{P}_k}\right]_j,\quad j=1,\dots,d_x \\
        [\mathscr{X}_{k}]_{j+d_x} & \triangleq \bm{x}_{k} - \left[\sqrt{(d_x+\lambda)\bm{P}_k}\right]_j,\quad j=1,\dots,d_x
    \end{split}
\end{equation}
where $\bm{P}_k\triangleq\mathbbm{E}\left[\left(\bm{x}_k-\mathbbm{E}\left[\bm{x}_k\right]\right)\left(\bm{x}_k-\mathbbm{E}\left[\bm{x}_k\right]\right)^T\right]$ is the covariance matrix and $\lambda\triangleq\alpha^2(d_x+\kappa)-d_x$. Here, $\alpha$ is a hyperparameter that controls the distribution of the sigma vectors around $\bm{x}_k$ and $\kappa$ is a scaling hyperparameter. We also define
\begin{equation}
    \begin{split}
        W_0^m         & \triangleq \frac{\lambda}{d_x+\lambda},                    \\
        W_0^c         & \triangleq \frac{\lambda}{d_x+\lambda} + 1-\alpha^2+\beta, \\
        W_j^m = W_j^c & \triangleq \frac{1}{2(d_x+\lambda)},\quad j=1,\dots,2d_x
    \end{split}
\end{equation}
where $\beta$ is a hyperparameter introduced to integrate prior knowledge of the distribution of $\bm{x}_k$~\cite{UKF2001}. The discussion on the choices of $\alpha$, $\kappa$, and $\beta$ is omitted here, and please refer to~\cite{UKF2001} for details.

With the sigma vectors and the weights established, we can introduce the UKF, which is encapsulated in the following two steps.
\paragraph{Prediction step} In this step, the sigma points $\mathscr{X}_{k-1}$, generated using the previous estimates $\hat{\bm{x}}_{k-1}$ and $\hat{\bm{P}}_{k-1}$, are used to predict the current state and the corresponding covariance. To this end, $\mathscr{X}_{k-1}$ is first propagated through $f$, which yields
\begin{equation}\label{eq:SigmaPredict}
    [\mathscr{X}^-_{k}]_j = f([\mathscr{X}_{k-1}]_j).\quad j=0,1,\dots,2d_x
\end{equation}
Then, $\bm{x}_k^-$ is calculated as the weighted mean of the sigma vectors, and $\bm{P}^-_k$ is the sum of $\bm{\Sigma}_f$ and the weighted covariance of the sigma vectors. Specifically, we have
\begin{equation}\label{eq:KalmanPredict}
    \begin{split}
        \bm{x}_k^- & = \sum_{j=0}^{2d_x}W_j^m[\mathscr{X}_k^-]_j,                                                                \\
        \bm{P}^-_k & = \sum_{j=0}^{2d_x}W_j^c([\mathscr{X}^-_k]_j-\bm{x}^-_k)([\mathscr{X}^-_k]_j-\bm{x}^-_k)^T + \bm{\Sigma}_f.
    \end{split}
\end{equation}
At this point, we have obtained initial predictions based on the previous estimates and knowledge of the system dynamics.
\paragraph{Correction step} In this step, the predicted state $\bm{x}_k^-$ and the corresponding covariance $\bm{P}^-_k$ are corrected using the observed state $\bm{z}_{k}$. To this end, $\mathscr{X}^-_{k}$ is first propagated through $h$, which results in
\begin{equation}\label{eq:SigmaObserve}
    [\mathscr{Z}^-_{k}]_j = h([\mathscr{X}^-_{k}]_j).\quad j=0,1,\dots,2d_x
\end{equation}
Then, the predicted observation, denoted as $\bm{z}^-_{k}$, and the corresponding covariance, denoted as $\bm{P}_{\bm{z},\bm{z}}$, are calculated by following a similar procedure as outlined in~\eqref{eq:KalmanPredict}. Specifically, we have
\begin{equation}\label{eq:ObservationPredict}
    \begin{split}
        \bm{z}^-_{k}           & = \sum_{j=0}^{2d_x}W^m_j[\mathscr{Z}^-_{k}]_j,                                                                      \\
        \bm{P}_{\bm{z},\bm{z}} & = \sum_{j=0}^{2d_x}W_j^c([\mathscr{Z}^-_{k}]_j-\bm{z}^-_{k})([\mathscr{Z}^-_{k}]_j-\bm{z}^-_{k})^T + \bm{\Sigma}_h.
    \end{split}
\end{equation}
Moreover, we define $\bm{P}_{\bm{x},\bm{z}}$ as the weighted cross covariance between $\bm{x}_k^-$ and $\bm{z}_{k}^-$. Specifically, we have
\begin{equation}\label{eq:CrossVariance}
    \bm{P}_{\bm{x},\bm{z}} \triangleq \sum_{j=0}^{2d_x}W_j^c([\mathscr{X}^-_k]_j-\bm{x}^-_k)([\mathscr{Z}^-_{k}]_j-\bm{z}^-_{k})^T.
\end{equation}
Then, the Kalman gain $\mathcal{K}_k$, which modulates the estimate's reliance on the observed state versus the system model, is computed as
\begin{equation}\label{eq:KalmanGain}
    \mathcal{K}_k = \bm{P}_{\bm{x},\bm{z}}\bm{P}^{-1}_{\bm{z},\bm{z}}.
\end{equation}
Finally, the estimated system state $\hat{\bm{x}}_k$ and covariance $\hat{\bm{P}}_k$ are corrected as
\begin{equation}\label{eq:KalmanCorrect}
    \begin{split}
        \hat{\bm{x}}_k & = \bm{x}^-_k + \mathcal{K}_k(\bm{z}_{k}-\bm{z}^-_{k}),             \\
        \hat{\bm{P}}_k & = \bm{P}^-_k - \mathcal{K}_k\bm{P}_{\bm{z},\bm{z}}\mathcal{K}^T_k.
    \end{split}
\end{equation}
Let $\hat{\bm{x}}_0$ and $\hat{\bm{P}}_0$ be the initial estimates of the system state and the covariance, respectively. The UKF is summarized in Algorithm~\ref{algo:UKF}.
\algrenewcommand\algorithmicindent{1em}
\begin{algorithm}[!t]
    \begin{algorithmic}[1]
        \Procedure{UKF}{$\hat{\bm{x}}_0$, $\hat{\bm{P}}_0$, $f$, $\bm{\Sigma}_f$, $h$, $\bm{\Sigma}_h$}
        \For{$k\in\{1,\dots,\infty\}$}
        \State Get a new observation $\bm{z}_{k}$.
        \State $\mathscr{X}_{k-1}\leftarrow(\hat{\bm{x}}_{k-1}, \hat{\bm{P}}_{k-1})$ using~\eqref{eq:SigmaVector}.
        \State $(\mathscr{X}_k^-, \bm{x}_k^-, \bm{P}_k^-)\leftarrow(\mathscr{X}_{k-1}, f, \bm{\Sigma}_f)$ using~\eqref{eq:SigmaPredict},~\eqref{eq:KalmanPredict}.
        \State $(\mathscr{Z}_{k}^-, \bm{z}_{k}^-, \bm{P}_{\bm{z},\bm{z}})\leftarrow(\mathscr{X}^-_{k}, h, \bm{\Sigma}_h)$ using~\eqref{eq:SigmaObserve},~\eqref{eq:ObservationPredict}.
        \State $\bm{P}_{\bm{x},\bm{z}}\leftarrow(\mathscr{X}_{k}^-, \bm{x}_k^-, \mathscr{Z}_{k}^-, \bm{z}_{k}^-)$ using~\eqref{eq:CrossVariance}.
        \State $\mathcal{K}_{k}\leftarrow(\bm{P}_{\bm{x},\bm{z}}, \bm{P}^{-1}_{\bm{z},\bm{z}})$ using~\eqref{eq:KalmanGain}.
        \State $(\hat{\bm{x}}_k, \hat{\bm{P}}_k)\leftarrow(\bm{x}_k^-, \bm{P}_k^-, \mathcal{K}_k, \bm{z}_{k}, \bm{z}_{k}^-, \bm{P}_{\bm{z},\bm{z}})$ using~\eqref{eq:KalmanCorrect}.
        \EndFor
        \State \Return $(\hat{\bm{x}}_k, \hat{\bm{P}}_k)$ for $k\in\{1,\dots,\infty\}$
        \EndProcedure
    \end{algorithmic}
    \caption{Unscented Kalman Filter}
    \label{algo:UKF}
\end{algorithm}
\begin{remark}
    In this paper, we adopt the UKF without the resampling step. The resampling operation introduces extra computational overhead and, as noted in~\cite{liu2010unscented,menegaz2015systematization}, is unnecessary under the assumption of additive noise and sufficiently accurate system model. However, when the system model exhibits significant inaccuracies or involves non-additive noise, a resampling mechanism can be readily incorporated to enhance estimation performance. In such cases, at the start of the correction step, a new set of sigma points is generated from the results of the prediction step and used in subsequent calculations. A numerical comparison of these approaches is presented in Section~\ref{sec:Numerical}.
\end{remark}

We notice that UKF leverages only the information up to the current time slot to make estimations for the next time slot. However, in the context of learning, it is common to accumulate a certain amount of data before starting the learning. This provides a compelling reason to improve the state estimation by incorporating future information.

Hence, we adopt the Unscented RTS (URTS) smoother~\cite{Smoother2008}, which involves the following two steps.
\setcounter{paragraph}{0}   % Reset the paragraph counter
\paragraph{Backward prediction step} This step starts with utilizing $\hat{\bm{x}}_k$ and $\hat{\bm{P}}_k$ from the output of UKF to predict the next state $\bm{x}^+_{k+1}$ and the corresponding covariance $\bm{P}^+_{k+1}$. To this end, we first construct $\mathscr{X}_k$ from $\hat{\bm{x}}_k$ and $\hat{\bm{P}}_k$ according to~\eqref{eq:SigmaVector}. Then, we apply $f$ on $\mathscr{X}_{k}$, which yields
\begin{equation}\label{eq:SmootherPredictSigma}
    [\mathscr{X}^+_{k+1}]_j = f([\mathscr{X}_{k}]_j).\quad j=0,1,\dots,2d_x
\end{equation}
Finally, the predicted state and covariance are calculated as
\begin{equation}\label{eq:SmootherPredict}
    \begin{split}
        \bm{x}^+_{k+1}          = & \sum_{j=0}^{2d_x}W_j^m[\mathscr{X}^+_{k+1}]_j,                                                                            \\
        \bm{P}^+_{k+1}          = & \sum_{j=0}^{2d_x}W_j^c([\mathscr{X}^+_{k+1}]_j-\bm{x}^+_{k+1})([\mathscr{X}^+_{k+1}]_j-\bm{x}^+_{k+1})^T + \bm{\Sigma}_f.
    \end{split}
\end{equation}
This step is very similar to the prediction step in UKF.
\paragraph{Backward correction step} In this step, the predicted state and covariance from the backward prediction step are corrected using future states. To this end, we first calculate the cross covariance between $\hat{\bm{x}}_k$ and $\bm{x}_{k+1}^+$ as
\begin{equation}\label{eq:SmootherCrossVariance}
    \bm{P}_{\hat{\bm{x}}_k,\bm{x}_{k+1}^+} = \sum_{j=0}^{2d_x}W_j^c([\mathscr{X}_{k}]_j-\hat{\bm{x}}_{k})([\mathscr{X}^+_{k+1}]_j-\bm{x}^+_{k+1})^T.
\end{equation}
Then, the smoother gain is given by
\begin{equation}\label{eq:SmootherGain}
    \mathcal{G}_k = \bm{P}_{\hat{\bm{x}}_k,\bm{x}_{k+1}^+}[\bm{P}^+_{k+1}]^{-1},
\end{equation}
where $[\cdot]^{-1}$ is the matrix inversion. Finally, we introduce the future information $(\tilde{\bm{x}}_{k+1}, \tilde{\bm{P}}_{k+1})$ to correct the predictions. More precisely, we have
\begin{equation}\label{eq:SmootherCorrect}
    \begin{split}
        \tilde{\bm{x}}_{k} & = \hat{\bm{x}}_k + \mathcal{G}_k(\tilde{\bm{x}}_{k+1}-\bm{x}^+_{k+1}),                \\
        \tilde{\bm{P}}_k   & = \hat{\bm{P}}_k + \mathcal{G}_k(\tilde{\bm{P}}_{k+1}-\bm{P}^+_{k+1})\mathcal{G}_k^T.
    \end{split}
\end{equation}
We assume that the UKF outputs $\zeta$ consecutive estimates. Meanwhile, we let $\tilde{\bm{x}}_{\zeta} = \hat{\bm{x}}_{\zeta}$ and $\tilde{\bm{P}}_{\zeta} = \hat{\bm{P}}_{\zeta}$. Then, the URTS smoother is summarized in Algorithm~\ref{algo:URTSSmoother}.
\begin{algorithm}[!t]
    \begin{algorithmic}[1]
        \Procedure{URTS Smoother}{$\tilde{\bm{x}}_{\zeta}$, $\tilde{\bm{P}}_{\zeta}$, $f$, $\bm{\Sigma}_f$}
        \For{$k\in\{\zeta-1,\dots,0\}$}
        \State $(\hat{\bm{x}}_{k}, \hat{\bm{P}}_k)\leftarrow$ Algorithm~\ref{algo:UKF}.
        \State $\mathscr{X}_k\leftarrow(\hat{\bm{x}}_{k}, \hat{\bm{P}}_{k},)$ using~\eqref{eq:SigmaVector}.
        \State $\mathscr{X}_{k+1}^+\leftarrow(\mathscr{X}_{k}, f)$ using~\eqref{eq:SmootherPredictSigma}.
        \State $(\bm{x}_{k+1}^+, \bm{P}_{k+1}^+)\leftarrow(\mathscr{X}_{k+1}^+, \bm{\Sigma}_f)$ using~\eqref{eq:SmootherPredict}.
        \State $\bm{P}_{\hat{\bm{x}}_k,\bm{x}_{k+1}^+}\leftarrow(\mathscr{X}_{k}, \hat{\bm{x}}_k, \mathscr{X}_{k+1}^+, \bm{x}_{k+1}^+)$ using~\eqref{eq:SmootherCrossVariance}.
        \State $\mathcal{G}_{k}\leftarrow(\bm{P}_{\hat{\bm{x}}_k,\bm{x}_{k+1}^+}, \bm{P}_{k+1}^+)$ using~\eqref{eq:SmootherGain}.
        \State $(\tilde{\bm{x}}_k, \tilde{\bm{P}}_k)\leftarrow(\hat{\bm{x}}_k, \hat{\bm{P}}_k, \mathcal{G}_k, \tilde{\bm{x}}_{k+1}, \bm{x}^+_{k+1}, \tilde{\bm{P}}_{k+1}, \bm{P}_{k+1}^+)$ using~\eqref{eq:SmootherCorrect}.
        \EndFor
        \State \Return $(\tilde{\bm{x}}_k, \tilde{\bm{P}}_k)$ for $k\in\{0,\dots,\zeta-1\}$
        \EndProcedure
    \end{algorithmic}
    \caption{Unscented Rauch-Tung-Striebel Smoother}
    \label{algo:URTSSmoother}
\end{algorithm}

Equipped with UKF and URTS smoother, clients can estimate the system states from the observation data.
\begin{remark}
    The choice of state estimation method can be adapted based on performance requirements and system conditions. Depending on the specific needs, the UKF can be replaced by other Kalman filter variants~\cite{chui2017kalman} or particle filter~\cite{djuric2003particle}. When only the system structure is known, unknown parameters can be incorporated into an augmented state vector and estimated jointly. Representative methods include the unified filter~\cite{bao2023unified} and the learning-based KalmanNet~\cite{KalmanNet}. When neither structure nor parameters are known, model-free approaches, such as the Data-Driven Nonlinear State Estimation (DANSE)~\cite{ghosh2024danse}, become necessary. Note that the proposed framework is compatible with a variety of state estimation methods, provided that these methods can deliver satisfactory state estimations from the observation data.
\end{remark}

\subsection{Linearization via Koopman Operator}\label{sec:KoopmanOperatorTheory}
Koopman operator theory~\cite{Koopman1931} demonstrates the possibility of capturing the dynamics of a nonlinear system by an infinite-dimensional linear operator acting on a Hilbert space of system state functions. Specifically, we define $g:\mathbbm{R}^{d_x}\rightarrow\mathbbm{R}$ as the observable function, which spans an infinite-dimensional Hilbert space $\mathscr{H}$. Then, we introduce a linear operator $\mathcal{K}:\mathscr{H}\rightarrow\mathscr{H}$ with respect to the observable function $g$, which satisfies
\begin{equation}
    \mathcal{K}g(\bm{x}_k) = g(\bm{x}_{k+1}),
\end{equation}
where $\mathcal{K}$ is called the Koopman operator. However, the infinite-dimensionality of $\mathscr{H}$ poses challenges in applied Koopman analysis.

To overcome this, researchers seek to approximate the linear dynamics within a subspace spanned by a finite set of base observable functions $\{g_1,\dots,g_d\}$ that remains invariant under $\mathcal{K}$. More precisely, for all observable functions such that
\begin{equation}
    g(\bm{x}_k) = a_1g_1(\bm{x}_k) + a_2g_2(\bm{x}_k) + \cdots + a_dg_d(\bm{x}_k),
\end{equation}
where $a_i\in\mathbbm{R}$ for $1\leq i\leq d$, the result of the operation of $\mathcal{K}$ remains in the subspace. Mathematically, we have
\begin{equation}
    \mathcal{K}g(\bm{x}_k) = b_1g_1(\bm{x}_k) + b_2g_2(\bm{x}_k) + \cdots + b_dg_d(\bm{x}_k),
\end{equation}
where $b_i\in\mathbbm{R}$ for $1\leq i\leq d$. Such subspace is called the Koopman-invariant subspace. Consequently, a finite-dimensional matrix approximation of $\mathcal{K}$ can be obtained by restricting it to the Koopman-invariant subspace. At the same time, any finite set of the eigenfunctions of $\mathcal{K}$ spans such an invariant subspace~\cite{Koopman2021}. Hence, we can choose the base observable functions as the eigenfunctions $\{\varphi_i\}_{i=1}^{\mathcal{O}}$ of $\mathcal{K}$.

Finally, a linearization of the nonlinear system can be expressed as
\begin{equation}
    \varphi(\bm{x}_{k+1}) = \bm{K}\varphi(\bm{x}_k),
\end{equation}
where $\bm{K}\in\mathbbm{R}^{\mathcal{O}\times \mathcal{O}}$ is a finite-dimensional matrix approximation of $\mathcal{K}$ and $\varphi$ is the $\mathcal{O}$-dimensional vector whose elements are the eigenfunctions of $\mathcal{K}$.

Once $\bm{K}$ and $\varphi$ are obtained, we can map the state in the original space through $\varphi$ to the state in the Koopman-invariant subspace, where the linear dynamics are captured by $\bm{K}$. Subsequently, the state in this subspace can be mapped back to the original space using the inverse of $\varphi$, denoted as $\varphi^{-1}$. However, identifying $\bm{K}$ and $(\varphi,\varphi^{-1})$ theoretically for a general nonlinear system is challenging, even for data-driven approaches like DMD~\cite{FirstDMD}. Therefore, we adopt a deep learning-based approach, where the goal is to train a neural network to learn the $\bm{K}$ and $(\varphi,\varphi^{-1})$. We embrace this approach for two primary reasons. First, the deep learning-based approach requires minimal prior knowledge of the underlying system. Second, such an approach facilitates the implementation of federated learning, which enables collaboration with privacy guarantees.

\section{Federated Learning of Koopman Operator}\label{sec:KFKL}
To achieve collaborative linearization, we first introduce the Deep Koopman Network (DKN), which learns the $\bm{K}$ and $(\varphi,\varphi^{-1})$ introduced in Section~\ref{sec:KoopmanOperatorTheory}. Then, we propose a framework wherein federated learning is implemented to facilitate collaboration.

\subsection{Deep Koopman Network}\label{sec:DeepKoopman}
The matrix approximation $\bm{K}$ of the Koopman operator and the corresponding $\varphi$ should satisfy
\begin{subequations}\label{eq:KoopmanMain}
    \begin{align}
        \bm{K}\varphi(\bm{x}_k) & =\varphi(\bm{x}_{k+1}), \label{eq:KoopmanLinear}                           \\
        \bm{x}_k                & =               \varphi^{-1}(\varphi(\bm{x}_k)). \label{eq:KoopmanMapping}
    \end{align}
\end{subequations}
Equation~\eqref{eq:KoopmanLinear} captures the ability to map the states in the original space to the Koopman-invariant subspace with linear dynamics. Meanwhile, equation~\eqref{eq:KoopmanMapping} represents the ability to map the state in the Koopman-invariant subspace back to the original space. Given the complexity of the theoretical analysis of~\eqref{eq:KoopmanMain}, we adopt a deep learning-based approach. Specifically, we train a neural network to achieve the desired functionalities. To this end, we first notice that the functionalities of $\varphi$ and the corresponding inversion $\varphi^{-1}$ are similar to those of an autoencoder. Consequently, we adopt the autoencoder-based Deep Koopman Network (DKN) depicted in Fig.~\ref{fig:DKNStructure}. In the following, we elaborate on each component of DKN.
\begin{figure}[t]
    \centering
    \includegraphics[width=\columnwidth]{./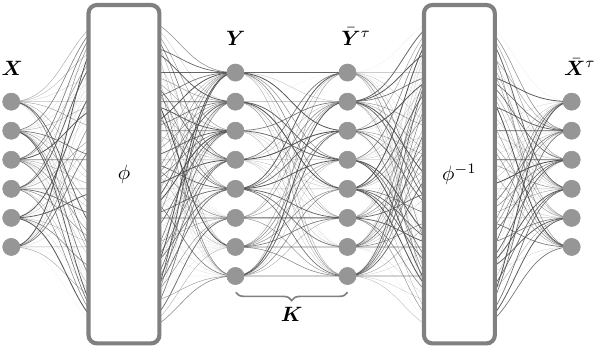}
    \caption{An illustration of DKN, where $\bm{X}$ and $\bar{\bm{X}}^{\tau}$ are the input and output, respectively. $\bm{Y}$ and $\bar{\bm{Y}}^{\tau}$ are the corresponding encoded states.}
    \label{fig:DKNStructure}
\end{figure}
\paragraph{Encoder $\phi$}
The encoder $\phi$ aims to mimic the functionality of $\varphi$, which maps the states in the original space to the Koopman-invariant subspace. To this end, we adopt a feedforward neural network with $\mathcal{Z}$ fully connected layers, each of size $\mathcal{H}$, and incorporate the rectified linear unit (ReLU) as the nonlinear activation. The input to $\phi$ consists of the states corresponding to $\mu$ consecutive time slots. More precisely, the input $\bm{X}$ is given by
\begin{equation}
    \bm{X} = \left[\bm{x}_{k-\mu+1},\bm{x}_{k-\mu+2},\dots,\bm{x}_k\right],
\end{equation}
where $\bm{x}_k$ is the state used in training. It is important to highlight that when $\mu\geq2$, the resulting $\bm{K}$ is not equivalent to the matrix approximation of the Koopman operator in~\eqref{eq:KoopmanMain}. In this case, DKN linearizes the nonlinear dynamics of a time-delay embedded representation of the system states. Nevertheless, $\bm{K}$ remains a valid matrix approximation of the Koopman operator for another system where $\bm{X}$ is the system state. That being said, our discussion will only focus on the case of $\mu=1$, which is widely considered in the literature. Hence, with a slight abuse of notation, we will continue to use $\bm{K}$ in the remainder of this paper. The dimension of the output $\bm{Y}$ aligns with the dimension of $\bm{K}$. For the rest of this paper, we denote the operation of the encoder as $\bm{Y} = \phi(\bm{X})$.
\begin{remark}
    Time-delay embedding offers two key advantages. First, it captures temporal information that is essential in many scenarios. For example, it is useful when the full system state is only partially observed, as supported by Takens' embedding theorem~\cite{takens2006detecting}. Second, the embedded representation naturally serves as data-driven observables in operator-theoretic methods such as Koopman analysis, facilitating system analysis through linearization~\cite{brunton2017chaos,kamb2020time}.
\end{remark}

\paragraph{Koopman operator approximation $\bm{K}$}
The matrix approximation $\bm{K}$ of the Koopman operator is responsible for advancing the state in the Koopman-invariant subspace. In DKN, $\bm{K}$ is a square matrix of dimension $\mathcal{O}$. During training, the elements of $\bm{K}$ are updated to achieve the desired functionality.

\paragraph{Decoder $\phi^{-1}$}
The decoder $\phi^{-1}$, which corresponds to the encoder $\phi$, acts as a tool to map the encoded states back to the original state space. To this end, $\phi^{-1}$ consists of the same hidden layers and nonlinear activation as in the encoder $\phi$. The input to $\phi^{-1}$ is $\bar{\bm{Y}}^{\tau}$, which is the result of the operation of $\bm{K}$ on $\bm{Y}$. Then, the predicted states are given by the output $\bar{\bm{X}}^{\tau}$. Specifically,
\begin{equation}
    \bar{\bm{X}}^{\tau} = \left[\bar{\bm{x}}_{k-\mu+1+\tau},\bar{\bm{x}}_{k-\mu+2+\tau},\dots,\bar{\bm{x}}_{k+\tau}\right],
\end{equation}
where $\tau$ is the prediction depth. We allow the variations in the value of $\tau$ to correspond to the case of different input sizes $\mu$. Nevertheless, as before, our discussion will only focus on the case of $\tau = 1$. The label $\bm{X}^{\tau}$ is defined as
\begin{equation}
    \bm{X}^{\tau} = \left[\bm{x}_{k-\mu+1+\tau},\bm{x}_{k-\mu+2+\tau},\dots,\bm{x}_{k+\tau}\right].
\end{equation}
Then, the difference between $\bar{\bm{X}}^{\tau}$ and $\bm{X}^{\tau}$ is quantified by the loss function to drive the training. Similar to the encoder, we denote the operation of the decoder as $\bar{\bm{X}}^{\tau} = \phi^{-1}(\bar{\bm{Y}}^{\tau})$ for the rest of this paper.

\paragraph{Loss function $\ell$}
The loss function serves as the driving force of DKN. For further elaboration, we first define $||\cdot||_2^2$ as the squared $\ell^2$-norm. We also define $d_{\bm{X}}$ and $d_{\bm{Y}}$ as the dimensionality of $\bm{X}$ and $\bm{Y}$, respectively. Then, building upon~\eqref{eq:KoopmanMain}, the loss function is defined as the weighted sum of the following three loss terms.
\begin{enumerate}
    \item The \textit{linear dynamics loss $\ell_1$} is derived from~\eqref{eq:KoopmanLinear} to facilitate the finding of a Koopman-invariant subspace with linear dynamics. To this end, we define
          \begin{equation}
              \ell_1 \triangleq \frac{1}{d_{\bm{Y}}}||\bar{\bm{Y}}^{\tau}-\bm{Y}^{\tau}||_2^2,
          \end{equation}
          where $\bm{Y}^{\tau}\triangleq\phi(\bm{X}^{\tau})$ is the encoded target state. We recall that $\bar{\bm{Y}}^{\tau} = \bm{K}\phi(\bm{X}) = \bm{K}\bm{Y}$, where $\bm{Y}$ is the encoded input. Hence, $\ell_1$ quantifies the accuracy of $\bm{K}$ in capturing the linear dynamics.
    \item The \textit{reconstruction loss $\ell_2$} is formulated based on~\eqref{eq:KoopmanMapping} to evaluate the performance of the autoencoder. To this end, we define
          \begin{equation}
              \ell_2 \triangleq \frac{1}{d_{\bm{X}}}||\bm{X}-\bar{\bm{X}}||_2^2,
          \end{equation}
          where $\bar{\bm{X}}\triangleq\phi^{-1}(\phi(\bm{X}))$ is the reconstructed input state when the encoder $\phi$ and the decoder $\phi^{-1}$ is successively applied. Hence, $\ell_2$ quantifies the ability of the autoencoder to reconstruct the input.
    \item The \textit{prediction loss $\ell_3$} evaluates the capability to reconstruct the states in the original space from the predicted states in the Koopman-invariant subspace. To this end, we define
          \begin{equation}
              \ell_3\triangleq \frac{1}{d_{\bm{X}}}||\bar{\bm{X}}^{\tau}-\bm{X}^{\tau}||_2^2.
          \end{equation}
          Hence, $\ell_3$ quantifies the overall accuracy when the Koopman operator approximation $\bm{K}$, the encoder $\phi$ and the decoder $\phi^{-1}$ are combined. We include $\ell_3$ because the equalities in~\eqref{eq:KoopmanMain} will not be achieved due to the characteristics of the neural network and the fact that $\bm{K}$ is a finite-dimensional approximation of the infinite-dimensional Koopman operator. Therefore, the presence of $\ell_3$ ensures that the minor errors introduced by the neural network are not amplified to the point where the correspondence between the states in the Koopman-invariant subspace and those in the original space deviates significantly.
\end{enumerate}
With the three loss functions in mind, the composite loss function used in training is defined as a weighted sum of the individual loss functions. More precisely, the composite loss function is defined as
\begin{equation}\label{eq:CompositeLoss}
    \ell\triangleq w_1\ell_1 + w_2\ell_2 + w_3\ell_3.
\end{equation}
The choice of the weight $w_i$ depends on the specific application and the desired behavior of the DKN. Note that each of the three loss functions serves a distinct purpose. To obtain a physically meaningful DKN, it is essential to incorporate all three loss functions. A physically meaningful DKN yields a faithful approximation of the Koopman operator and the associated observables for the underlying nonlinear system, which enables numerous practical advantages, as demonstrated in~\cite{brunton2016koopman,abraham2019active,mezic2013analysis}.
\begin{remark}
    While more sophisticated loss functions have been proposed in the literature, including~\cite{DeepKoopman2018,girgis2022predictive,shi2022deep}, we choose the simple loss function~\eqref{eq:CompositeLoss} for two reasons. First, as an example, we want to show that even with a simple loss function, the proposed framework can effectively achieve its goals. Second, and more importantly, the loss function~\eqref{eq:CompositeLoss} is designed to penalize only the one-step prediction error, rather than multistep prediction errors. This formulation enhances practicality in scenarios involving real-time data collection. As demonstrated by the simulation results presented in Section~\ref{sec:Numerical}, this simple loss function continues to deliver satisfactory performance.
\end{remark}

Once the loss function is determined, the gradients can be calculated. Then, an optimization algorithm can be used to backpropagate the gradients and update the parameters. Ideally, with appropriate hyperparameters, DKN will exhibit the desired behavior after an acceptable number of epochs. However, this satisfactory behavior relies on the availability of sufficient and comprehensive data. One simple approach to tackle this issue is to include extra data from other clients, but this may raise privacy concerns. Therefore, we propose a framework based on federated learning, which enables collaborative training among multiple clients without sharing the raw data.

\subsection{Collaborative Koopman Operator Learning}
In this subsection, we propose Kalman Filter aided Federated Koopman Learning (KF-FedKL), as illustrated in Fig.~\ref{fig:Framework}, where multiple clients independently observe a common nonlinear system and collaboratively train the DKN to achieve linearization.
\begin{figure*}[t]
    \centering
    \includegraphics[width=\textwidth]{./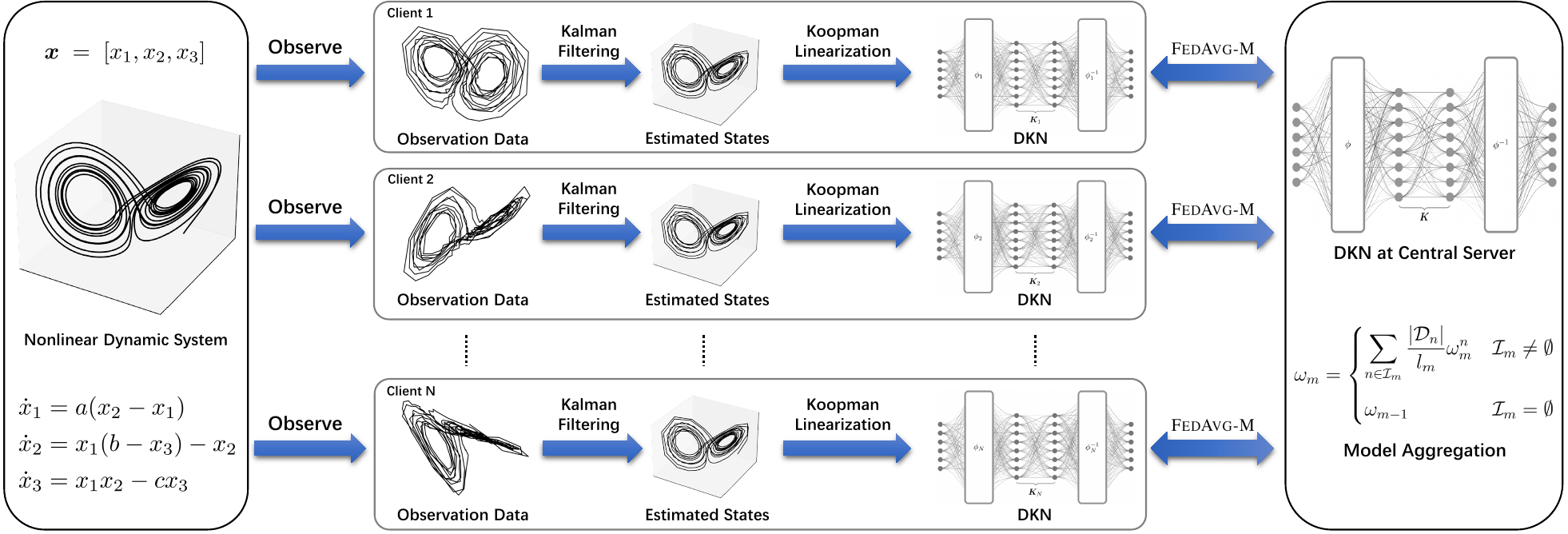}
    \caption{Illustration of the KF-FedKL structure using the Lorenz63 system as an example. Each client independently observes the nonlinear system and estimates the system state using a Kalman filter based on local observations. Clients then train local DKNs using the estimated states and collaborate under the coordination of the central server following \textsc{FedAvg-M}.}
    \label{fig:Framework}
\end{figure*}

To this end, we consider a case where time is slotted and normalized. The time slot is indexed by $m$. To better align with real-world scenarios, we assume that each client may encounter circumstances that cause the observation to fail. Mathematically, we assume that client $n$ has a probability $p_n$ of successful observation in each time slot, and a successful observation yields $\zeta$ consecutive observation data. It is important to note that $p_n$ is assumed to be constant across time slots and independent across clients. Then, based on the available data, the central server activates a subset of clients in each time slot to participate in each training round. For the active clients, the UKF and URTS smoother introduced in Section~\ref{sec:Kalman} are utilized to estimate the system state. These smoothed estimates are then used to train the latest DKN received from the central server. We denote the dataset used for training at client $n$ at time slot $m$ by $\mathcal{X}^n_m$. In KF-FedKL, $\mathcal{X}^n_m$ consists of the estimated states from the Kalman filter and changes over time. Let $\mathcal{I}_{m}$ denote the set of active clients at time slot $m$. Then, the optimization problem for the active client $n\in\mathcal{I}_m$ at time slot $m$ can be formulated as follows.
\begin{argmini}|s|
    {\omega_m^n}{\sum_{\bm{x}\in\mathcal{X}^n_m}\ell(\omega_{m}^n;\bm{x}),}{\label{eq-new30}}{}
\end{argmini}
where $\omega_m^n$ denotes the DKN trained by client $n$ at time slot $m$ and $\ell(\omega_m^n;\bm{x})$ is the loss incurred by $\omega_m^{n}$ on data $\bm{x}$. The optimization is performed by training the model, initialized with $\omega_{m-1}$, on the local dataset $\mathcal{X}^n_m$, where $\omega_{m-1}$ is the central server's DKN at time slot $m-1$. Specifically, the active client $n$ first partitions the data, denoted by $\mathcal{D}_n$, into batches $\mathcal{B}\triangleq[\bm{b}_1,\dots]$. Then, with learning rate $\eta_n$, client $n$ updates the DKN for $E$ epochs. Mathematically, in epoch $e$ and for batch $\bm{b}$, the update is given by
\begin{equation}
    \omega^n_e = \omega^n_{e-1} - \eta_{e-1}\nabla\mathcal{L}_{e-1}^n(\omega_{e-1}^n;\bm{b}),
\end{equation}
where $\omega^n_e$ is the DKN at epoch $e$ and $\mathcal{L}_{e-1}^n(\omega_{e-1}^n;\bm{b})\triangleq\frac{1}{|\bm{b}|}\sum_{\bm{x}\in\bm{b}}\ell(\omega_{e-1}^n;\bm{x})$. Upon completion of local training, the data $\mathcal{D}_n$ is emptied and the active client uploads the result to the central server, where contributions are aggregated to update the DKN at the central server. In KF-FedKL, we adopt the synchronous optimization scheme, in which the central server waits until all the active clients have completed their local training before updating the DKN. We assume that the training time for all clients does not exceed one time slot. The time required for data transmission and DKN updating is ignored.

In the sequel, we present an overview of the algorithm adopted by the central server to manage the collaboration. The proposed algorithm is derived from the \textsc{FedAvg} originally presented in~\cite{FedAve2017}. For clarity, we refer to this modified version as \textsc{FedAvg-M}. In addition, we define $\mathcal{D}\triangleq[\mathcal{D}_1,\dots,\mathcal{D}_N]$. Then, under \textsc{FedAvg-M}, the central server activates clients based on $\mathcal{D}$ and a predetermined policy $\mathcal{P}$ in each time slot. For example, if a threshold-based policy is adopted, the central server activates the client only if the available data exceeds a certain threshold. The central server initiates the training only if there is at least one active client, i.e., $\mathcal{I}_{m}\neq\emptyset$. Otherwise, $\omega_{m} = \omega_{m-1}$. In the case of $\mathcal{I}_{m}\neq\emptyset$, the central server first distributes the latest DKN $\omega_{m-1}$ to each active client. Then, the active clients update the DKN with their own data. Upon completing the local training, the central server receives the updated DKNs from the active clients and adopts a weighted sum of the received DKNs as the new DKN $\omega_m$. The weight assigned to active client $n$ is defined as $\frac{|\mathcal{D}_n|}{l_{m}}$, where $l_{m}\triangleq\sum_{n\in\mathcal{I}_{m}}|\mathcal{D}_n|$ is the sum of the sizes of data available to all active clients in time slot $m$. Mathematically, the new DKN is given by
\begin{equation}\label{eq:WeightedSum}
    \omega_{m} \triangleq \sum_{n\in\mathcal{I}_{m}}\frac{|\mathcal{D}_n|}{l_{m}}\omega_{m}^n,
\end{equation}
where $\omega_m^n$ is the DKN trained on client $n$ at time slot $m$. The pseudocode for \textsc{FedAvg-M} is given in Algorithm~\ref{algo:MyFedAve}.
\begin{algorithm}[!t]
    \begin{algorithmic}[1]
        \Procedure{\textsc{FedAvg-M}}{$\eta_1,\dots,\eta_N$}
        \State Initialize $\omega_0$
        \For{$m\in\{1,2,\dots,M\}$}
        \State $(\mathcal{I}_{m}, l_{m})\leftarrow\mathcal{P}(\mathcal{D})$
        \For{$n\in\mathcal{I}_{m}$ \textbf{in parallel}}
        \State $\omega_{m-1}^n\leftarrow\omega_{m-1}$
        \State $\omega^n_{m}\leftarrow$ \textsc{ClientUpdate}($\omega_{m-1}^n$, $\mathcal{D}_n$, $\eta_n$)
        \State $\mathcal{D}_n\leftarrow\emptyset$
        \EndFor
        \State $\omega_{m}\leftarrow\sum_{n\in\mathcal{I}_{m}}\frac{|\mathcal{D}_n|}{l_{m}}\omega^n_{m} + \mathbbm{1}_{\{\mathcal{I}_{m}=\emptyset\}}\omega_{m-1}$
        \EndFor
        \State \Return $\omega_M$
        \EndProcedure
        \Procedure{ClientUpdate}{$\omega$, $\mathcal{D}$, $\eta$}
        \State Divide $\mathcal{D}$ into batches $\mathcal{B}$; $\omega_0 = \omega$
        \For{$e\in\{1,2,\dots,E\}$ and $\bm{b}\in\mathcal{B}$}
        \State $\omega_e\leftarrow\omega_{e-1}-\eta_{e-1}\nabla\mathcal{L}_{e-1}(\omega_{e-1};\bm{b})$
        \EndFor
        \State \Return $\omega_E$
        \EndProcedure
    \end{algorithmic}
    \caption{Modified \textsc{FedAvg}}
    \label{algo:MyFedAve}
\end{algorithm}
\begin{remark}
    We will not discuss the optimization of the policy $\mathcal{P}$. Different policies will be examined and compared numerically in the next section. At the same time, optimizing the result aggregation strategy is beyond the scope of this paper. We assume that the central server uses the weighted sum approach given in~\eqref{eq:WeightedSum}.
\end{remark}

\subsection{Convergence Analysis}
To facilitate the analysis, we consider the case where the probability of success observation $p_n=p$ for $1\leq n\leq N$. Additionally, the central server adopts the policy under which all the clients with data are selected in each time slot. We divide a single time slot, denoted by the subscript $m$, into smaller time slots, denoted by the subscript $t$, based on the number of training epochs $E$. The relationship between $m$, $t$, and $E$ is illustrated in Fig.~\ref{fig:TimeRelation}. To better distinguish, we refer to time slot $m$ as training round $m$ for the remainder of this subsection.
\begin{figure}[t]
    \centering
    \includegraphics[width=\columnwidth]{./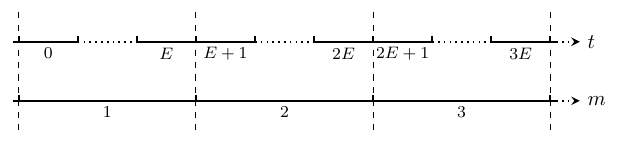}
    \caption{The relationship between $t$, $m$, and $E$.}
    \label{fig:TimeRelation}
\end{figure}
To simplify the presentation of theoretical analysis, we consider the following equivalent scenario. In each time slot $t$, perfect data is available to each client. In each training round $m$, the central server selects each client independently with probability $p$, and the selected client's data will incur an error, representing the error introduced by the Kalman filter. Finally, at the end of each training round $m$, the data at all the clients will be discarded. To simplify the analysis, we assume that the error affects the training in the following way.
\begin{equation}\label{eq:nuDefinition}
    \nu_{t+1}^{n} = \omega^n_t - \eta_t(\nabla\mathcal{L}_t^n(\omega_t^n;\bm{x}) + e_t^n),
\end{equation}
where $\nu_{t+1}^{n}$ is the immediate result of one-step stochastic gradient descent (SGD)~\cite{robbins1951stochastic} update, $\omega^n_t$ is the DKN of client $n$ at time slot $t$, $\bm{x}$ is the data used in update, and $e_t^n$ represents the error in the gradient caused by the error in the data. We assume $e_t^n$ has zero mean with variance $\sigma^2_{n}$, and denote the maximum variance as $\sigma^2=\max\{\sigma^2_n\}$.

To start the convergence analysis, we define $\mathcal{L}_t^n(\omega)\triangleq\frac{1}{|\mathcal{X}_t^n|}\sum_{\bm{x}\in\mathcal{X}_t^n}\ell(\omega;\bm{x})$, where $\mathcal{X}_t^n$ is the data at client $n$ at time slot $t$. Then, the following commonly adopted~\cite{FLConvergenceAnalysis0,FLConvergenceAnalysis1,FLConvergenceAnalysis2} assumptions are made.
\begin{assumption}[L-smooth]\label{ass:1}
    For all $\omega$ and $\nu$, $\mathcal{L}_t^n(\nu)\leq \mathcal{L}_t^n(\omega) + (\nu-\omega)^T\nabla \mathcal{L}_t^n(\omega) + \frac{L}{2}||\nu-\omega||^2_2$ holds for $1\leq n\leq N$ at time slot $t\geq0$.
\end{assumption}
\begin{assumption}[$\mu$-strongly convex]\label{ass:2}
    For all $\omega$ and $\nu$, $\mathcal{L}_t^n(\nu)\geq \mathcal{L}_t^n(\omega) + (\nu-\omega)^T\nabla \mathcal{L}_t^n(\omega) + \frac{\mu}{2}||\nu-\omega||^2_2$ holds for $1\leq n\leq N$ at time slot $t\geq0$.
\end{assumption}
\begin{assumption}[Bounded variance of stochastic gradients]\label{ass:3}
    Let $\xi_t^n$ be sampled from client $n$'s local data at time slot $t$. Then, $\mathbbm{E}\left[||\nabla \mathcal{L}_t^n(\omega_t^n;\xi_t^n)-\nabla \mathcal{L}_t^n(w_t^n)||^2_2\right]\leq \delta_n^2$ holds for $1\leq n\leq N$ at time slot $t\geq0$.
\end{assumption}
\begin{assumption}[Bounded expected squared norm of stochastic gradients]\label{ass:4}
    $\mathbbm{E}\left[||\nabla\mathcal{L}_t^n(\omega_t^n;\xi_t^n)||^2_2\right]\leq G^2$ holds for $1\leq n\leq N$ at time slot $t\geq0$.
\end{assumption}
Let $\mathcal{L}_t(\omega) = \frac{1}{N}\sum_{n=1}^N\mathcal{L}^n_t(\omega)$, and denote the minimum values of $\mathcal{L}^n_t(\omega)$ and $\mathcal{L}_t(\omega)$ for $t\geq0$ as $\mathcal{L}^{n,*}(\omega)$ and $\mathcal{L}^*(\omega)$, respectively. We then present the main results of the convergence analysis.
\begin{theorem}\label{thm:ConvergenceAnalysis}
    When assumptions~\ref{ass:1} to~\ref{ass:4} hold, $\eta_t$ is decreasing, $\eta_t\leq \frac{1}{4L}$, and $\eta_t \leq 2\eta_{t+E}$ for $t\geq0$, $\omega_{t+1}$ satisfies the following recursive form.
    \begin{equation}
        \begin{split}
            \mathbbm{E}\big[||\omega_{t+1}- & \omega^*||_2^2\big]\leq (1-\mu\eta_t)\mathbbm{E}\left[||\omega_t -\omega^*||_2^2\right] +                      \\
                                            & \eta_t^2\left[8(E-1)^2G^2 + 6L\Gamma\right] +  \frac{1}{N^2}\sum_{n=1}^N\left(\delta_n^2 + \sigma_n^2\right) + \\
                                            & \sum_{k=1}^NC_kE^2\eta_{t}^2\left(G^2+\sigma^2\right),
        \end{split}
    \end{equation}
    where $\omega^*$ is the target DKN and
    \begin{equation}
        C_k\triangleq\frac{4(N-k)N!p^k(1-p)^{N-k}}{k(N-1)k!(N-k)!},
    \end{equation}
    \begin{equation}
        \Gamma = \mathcal{L}^* - \frac{1}{N}\sum_{n=1}^N\mathcal{L}^{n,*}.
    \end{equation}
\end{theorem}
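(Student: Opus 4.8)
The plan is to adapt the standard one-step-recursion technique for \textsc{FedAvg} on non-IID data~\cite{FLConvergenceAnalysis0,FLConvergenceAnalysis1} to the present setting, whose two novel features are the Kalman-filter gradient error $e_t^n$ in~\eqref{eq:nuDefinition} and the random, binomially distributed client participation. In the equivalent scenario each client holds equal-size perfect data, so the aggregation weight $\frac{|\mathcal{D}_n|}{l_m}$ in~\eqref{eq:WeightedSum} reduces to $\frac{1}{k}$ whenever $k$ clients are active, and aggregation becomes a uniform average over the active set. First I would introduce the auxiliary fully-averaged iterate $\bar{\omega}_t\triangleq\frac{1}{N}\sum_{n=1}^N\omega_t^n$, the full-gradient average $\bar{g}_t\triangleq\frac{1}{N}\sum_n\nabla\mathcal{L}_t^n(\omega_t^n)$, and the noisy stochastic-gradient average $g_t\triangleq\frac{1}{N}\sum_n[\nabla\mathcal{L}_t^n(\omega_t^n;\xi_t^n)+e_t^n]$. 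Because the selection is independent with probability $p$, the sampled aggregate is an unbiased estimate of $\bar{\omega}_{t+1}$, so I can track $\bar{\omega}_{t+1}$ and fold the selection into an extra variance term.

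The backbone is the decomposition
\begin{equation*}
    ||\bar{\omega}_{t+1}-\omega^*||_2^2 = ||\bar{\omega}_t-\omega^*-\eta_t\bar{g}_t||_2^2 + 2\eta_t\langle\bar{\omega}_t-\omega^*-\eta_t\bar{g}_t,\,\bar{g}_t-g_t\rangle + \eta_t^2||g_t-\bar{g}_t||_2^2.
\end{equation*}
Upon taking expectations the cross term vanishes, since both the stochastic-gradient noise (Assumption~\ref{ass:3}) and the error $e_t^n$ are zero-mean; by independence across clients the last term contributes the stochastic-plus-Kalman noise term $\frac{\eta_t^2}{N^2}\sum_n(\delta_n^2+\sigma_n^2)$. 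For the leading term I would apply $\mu$-strong convexity (Assumption~\ref{ass:2}) and $L$-smoothness (Assumption~\ref{ass:1}) together with $\eta_t\leq\frac{1}{4L}$ to extract the contraction $(1-\mu\eta_t)||\bar{\omega}_t-\omega^*||_2^2$ and the heterogeneity contribution $6L\eta_t^2\Gamma$; the residual client-drift $\frac{1}{N}\sum_n\mathbbm{E}[||\bar{\omega}_t-\omega_t^n||_2^2]$ is bounded by $4\eta_t^2(E-1)^2G^2$ via the bounded-gradient Assumption~\ref{ass:4} and the step-size regularity $\eta_t\leq2\eta_{t+E}$, which limits how far local iterates drift over the $E$ epochs since the last synchronization. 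Together these give the $\eta_t^2[8(E-1)^2G^2+6L\Gamma]$ term.

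The genuinely new and hardest step is controlling the variance injected by random participation. Conditioning on exactly $k$ active clients, an event of probability $\binom{N}{k}p^k(1-p)^{N-k}$, the uniform average over the active set deviates from $\bar{\omega}_{t+1}$ with a variance carrying the finite-population-correction factor $\frac{N-k}{(N-1)k}$ familiar from sampling-without-replacement; bounding each client's per-round displacement through the effective gradient magnitude $E\eta_t(G^2+\sigma^2)^{1/2}$ (the added $\sigma^2$ reflecting the Kalman error entering alongside the gradient) and then marginalizing over the binomial law of $k$ produces exactly $\sum_{k=1}^N C_k E^2\eta_t^2(G^2+\sigma^2)$ with the stated $C_k=\frac{4(N-k)}{k(N-1)}\binom{N}{k}p^k(1-p)^{N-k}$. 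I expect this sampling step to be the principal obstacle: unlike the fixed-cardinality sampling of~\cite{FLConvergenceAnalysis0}, the active-set size is itself random, so the variance must be computed conditionally and then averaged, and the error $e_t^n$ must be threaded consistently so that $G^2$ is replaced by $G^2+\sigma^2$ wherever the raw gradient magnitude was used. Collecting the four contributions then yields the claimed recursion.
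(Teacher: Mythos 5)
Your proposal is correct and follows essentially the same route as the paper's proof: the same decomposition into an SGD-contraction term (handled via strong convexity, smoothness, and the drift bound $8\eta_t^2(E-1)^2G^2+6L\eta_t^2\Gamma$), the same combined stochastic-plus-Kalman variance $\frac{1}{N^2}\sum_n(\delta_n^2+\sigma_n^2)$, a vanishing cross term from the unbiasedness of the uniformly averaged random active set, and the same sampling-variance computation conditioning on $|\mathcal{I}_{t+1}|=k$ with the finite-population factor $\frac{N-k}{k(N-1)}$ marginalized over the binomial law to produce $\sum_k C_kE^2\eta_t^2(G^2+\sigma^2)$. The only cosmetic difference is that you retain the factor $\eta_t^2$ on the noise term where the paper's final statement loosens it away.
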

\begin{proof}
    The proof follows the methodology presented in~\cite{FLConvergenceAnalysis0}. Two major differences in our analysis are that, under our policy, the number of clients selected each time is a random variable, and the effect of erroneous data is considered. The complete proof can be found in Appendix~\ref{sec:ConvergenceAnalysis}.
\end{proof}
It is important to note that, in Theorem~\ref{thm:ConvergenceAnalysis}, $\omega_{t+1}$ is a virtual sequence that aligns with the DKN at the central server when $t+1\in\{nE\mid n=1,2,...\}$. The convergence analysis reveals that both the average and worst-case error variances play a critical role in determining the convergence behavior, highlighting the importance of uniformly reliable state estimation across all clients.
\begin{remark}
    Performance degradation caused by erroneous data can be mitigated through various approaches at different stages of KF-FedKL. On the client side, state estimations with large covariance can be excluded from training, and expert-defined thresholds can be applied to remove unrealistic estimates. During training, techniques such as adaptive learning rate adjustment~\cite{song2015learning} and error-tolerant loss function~\cite{fang2022robust} can be employed to improve robustness. On the server side, aggregation weights can be adaptively adjusted based on the deviation of local models from the central server's model. In addition, techniques such as Byzantine-robust federated learning~\cite{pmlr-v80-yin18a} can be employed to mitigate the impact of clients with erroneous data. The implementation of these mitigation strategies is beyond the scope of this paper and will be considered as an important direction for future work.
\end{remark}

\section{Numerical Results}\label{sec:Numerical}
This section provides a comprehensive performance evaluation, supported by extensive numerical results.

\subsection{Experiment settings}\label{sec:SimulationConfig}
We adopt the Lorenz63 system~\cite{Lorenz1963} as the default nonlinear system the clients observe. With the system state denoted as $\bm{x} =[x_1,x_2,x_3]$, the evolution of the Lorenz63 system is governed by
\begin{equation}\label{eq:Lorenz63}
    \begin{split}
        \dot{x}_1 & = a(x_2-x_1),     \\
        \dot{x}_2 & = x_1(b-x_3)-x_2, \\
        \dot{x}_3 & = x_1x_2 - cx_3,
    \end{split}
\end{equation}
where $a=10$, $b=28$, and $c=\frac{8}{3}$. The trajectory is obtained by numerically solving the Ordinary Differential Equations (ODEs) using the RK45 algorithm with a randomly initial state. It is important to note that the calculated trajectory is not available to the clients.

We consider a system with $N=5$ clients that adopt the same parameters except for those in the observation mechanism described by $h_n$. For clarity, we use client $n$ as an illustrative example. In a typical time slot, there are two processes. First, client $n$ independently observes the nonlinear system with a success probability $p_n=0.7$, where a successful observation yields $\zeta=300$ consecutive observation data. The observation mechanism is described by
\begin{equation}
    \bm{z}_{k,n} = h_n(\bm{x}_{k}) + \bm{v}_n = (\bm{x}_{k}-\bm{o}_n)\bm{b}_n^T + \bm{v}_n,
\end{equation}
where $\bm{b}_n \triangleq [\bm{b}_{n,1}; \bm{b}_{n,2}]$. We choose $\bm{b}_{n,1}$, $\bm{b}_{n,2}$, and $\bm{o}_n$ as three three-dimensional row vectors that are orthogonal to each other. Then, these observation data are utilized to estimate the system state using the Kalman filter introduced in Section~\ref{sec:Kalman}. For simulation purposes, we assume that the nonlinear system is restarted at the beginning of each time slot, and the initial state at each time slot is randomly and independently generated. The second process involves training the DKN. The central server determines which clients to activate based on a threshold-based policy. Specifically, the central server chooses client $n$ only when $|\mathcal{D}_n|\geq\rho$. Here, we set $\rho = 5\zeta$, which is equivalent to the data accumulated from five successful observations. When no client is activated, the system proceeds to the next time slot. Otherwise, each client receives the latest DKN from the central server and trains the received DKN by following the \textsc{ClientUpdate} function detailed in Algorithm~\ref{algo:MyFedAve}. Training is completed at the end of the current time slot, after which the updated DKN is uploaded to the central server, and the data used for training is discarded. To highlight key findings, we simulate the system for $M=200$ time slots.

In training, we set the number of epochs to $E=10$ with a batch size of $B=64$. The \texttt{Adam} optimizer~\cite{kingma2014adam} is adopted with learning rate $\eta=0.001$ and weight decay $\lambda=10^{-7}$. Additionally, we adopt an exponential learning rate scheduler with a multiplicative factor $\gamma=0.995$ to facilitate efficient training. We initialize the autoencoder in DKN independently for each client and the central server using He initialization with Kaiming normal distribution. Each element in the matrix $\bm{K}$ is randomly initialized following a normal distribution $\mathcal{N}(0,10^{-4})$. Table~\ref{tab:ParameterChoice} summarizes the default parameters used in numerical simulations. Unless stated otherwise, numerical results in the remainder of this paper are generated using these default parameters.
\begin{table}[t]
    \centering
    \begin{tabular}{lc|lc|lc} \toprule
        Parameter                             & Value    & Parameter     & Value  & Parameter  & Value         \\ \midrule
        $N$                                   & $5$      & $\alpha$      & $0.1$  & $E$        & $10$          \\
        $p_n$                                 & $0.7$    & $\kappa$      & $-1$   & $B$        & $64$          \\
        $\zeta$                               & $300$    & $\beta$       & $2$    & $\eta$     & $0.001$       \\
        $M$                                   & $200$    & $\mu=\tau$    & $1$    & $\lambda$  & $10^{-7}$     \\
        $\bm{\Sigma}_{f,n}=\bm{\Sigma}_{h,n}$ & $\bm{I}$ & $\mathcal{O}$ & $4d_x$ & $\gamma$   & $0.995$       \\
        $\hat{\bm{x}}_0$                      & $\bm{0}$ & $\mathcal{Z}$ & $1$    & $\rho_n$   & $5\zeta$      \\
        $\hat{\bm{P}}_0$                      & $\bm{I}$ & $\mathcal{H}$ & $30$   & $\omega_n$ & $\frac{1}{3}$ \\ \bottomrule
    \end{tabular}
    \caption{Summary of key parameters of KF-FedKL and their assigned values in numerical simulations.}
    \label{tab:ParameterChoice}
\end{table}

The DKN at the central server represents the learned linear model, and its accuracy is evaluated by prediction accuracy. Specifically, we adopt a combination of the following two measures. The first performance measure evaluates the DKN's ability to predict future states of the system. To this end, for $\zeta$ consecutive states, we define
\begin{equation}
    \mathscr{E}_{1,l} \triangleq \frac{1}{\zeta-l+1}\sum_{k=0}^{\zeta-l}\left(\frac{1}{d_x}||\bm{x}_{k+l}-\bar{\bm{x}}^{k+l}||_2^2\right),
\end{equation}
where $\bm{x}_k$ is the data used in performance evaluation and $\bar{\bm{x}}^{k+l}\triangleq\phi^{-1}(\bm{K}^{l}\phi(\bm{x}_k))$. The second performance measure evaluates the DKN's ability to predict future states in the Koopman-invariant subspace. Similarly, we define
\begin{equation}
    \mathscr{E}_{2,l} \triangleq \frac{1}{\zeta-l+1}\sum_{k=0}^{\zeta-l}\left(\frac{1}{d_y}||\bm{y}_{k+l}-\bm{K}^l\bm{y}_{k}||_2^2\right),
\end{equation}
where $\bm{y}_{k}\triangleq\phi(\bm{x}_{k})$ and $d_y$ is the dimensionality of $\bm{y}_{k}$. Then, the composite performance measure is defined as
\begin{equation}
    \mathscr{E} \triangleq \frac{1}{2l_1}\sum_{l=1}^{l_1}\mathscr{E}_{1,l} + \frac{1}{2l_2}\sum_{l=1}^{l_2}\mathscr{E}_{2,l}.
\end{equation}
In the simulation, we refer to $\mathscr{E}$ as the prediction error and choose $l_1=l_2=5$.

\subsection{Simulation Results}
We begin by evaluating the performance of \textsc{FedAvg-M} on the nonlinear systems listed in Table~\ref{tab:SystemConsidered}.
\begin{figure*}[t]
    \centering
    \begin{minipage}{0.295\textwidth}
        \includegraphics[width=\textwidth]{./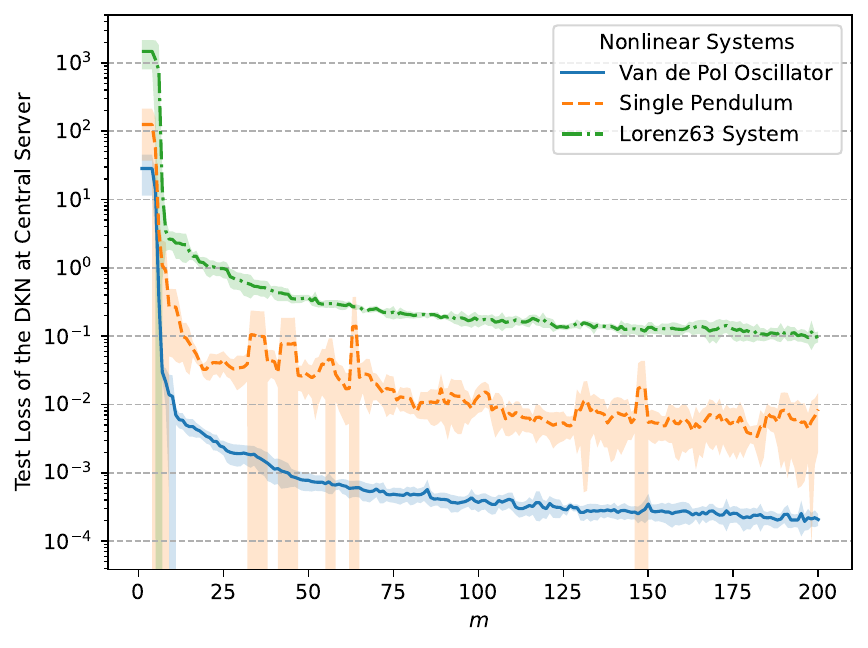}
        \subcaption{Canonical nonlinear systems.}
        \label{fig:System}
    \end{minipage}
    \begin{minipage}{0.32\textwidth}
        \includegraphics[width=\textwidth]{./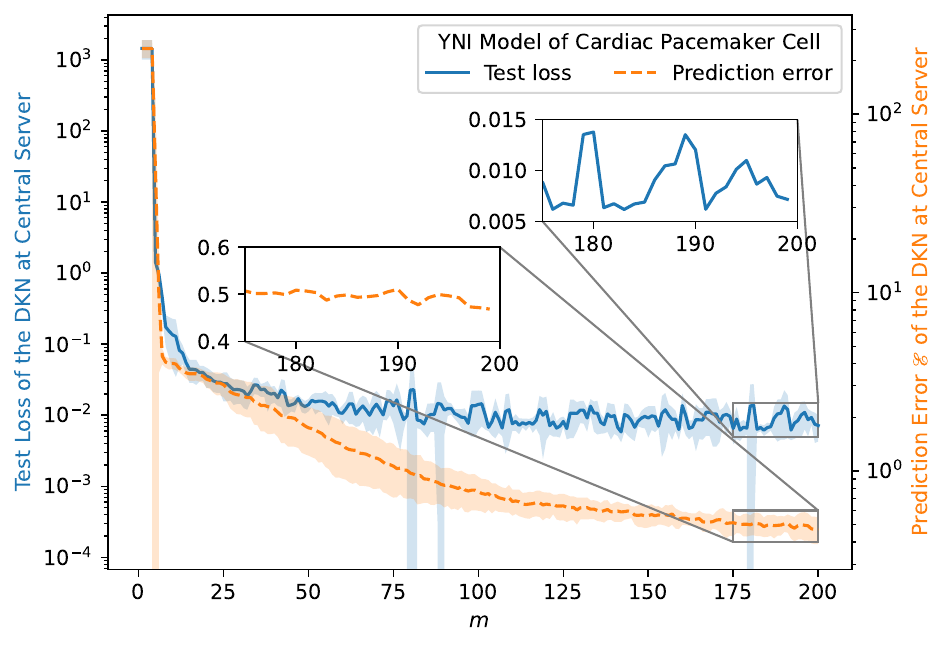}
        \subcaption{High-dimensional nonlinear system.}
        \label{fig:Application}
    \end{minipage}
    \begin{minipage}{0.32\textwidth}
        \includegraphics[width=\textwidth]{./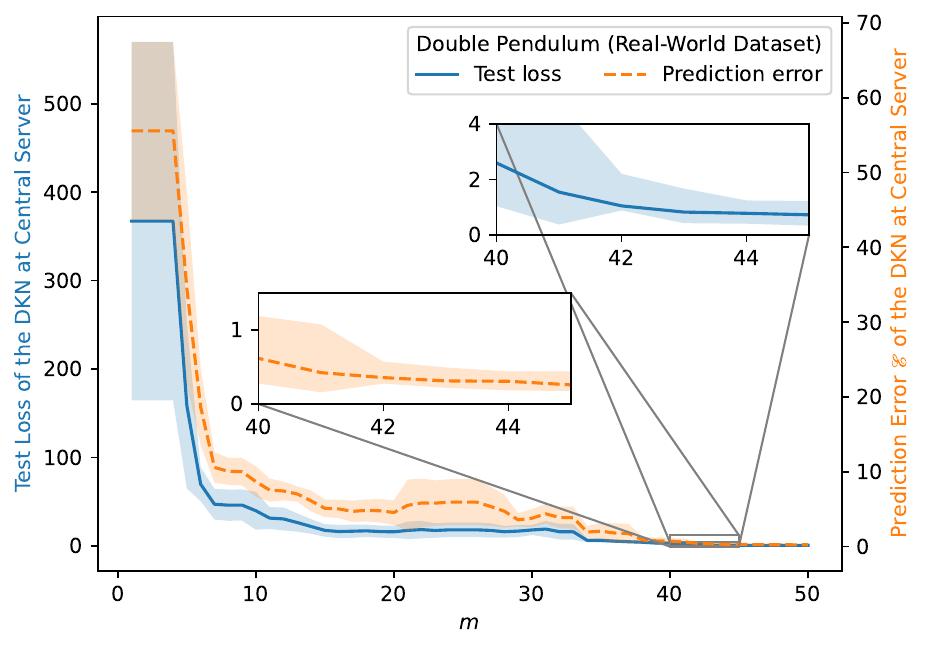}
        \subcaption{Nonlinear system with real-world dataset.}
        \label{fig:realdata}
    \end{minipage}
    \caption{Performance of KF-FedKL on various nonlinear systems. Lines represent the average of five independent simulations, and the shaded areas represent the standard deviations.}
    \label{fig:various_systems}
\end{figure*}
\begin{table*}[t]
    \centering
    \begin{tabular}{cccccc} \toprule
        {}                 & {System Name}                          & {Dimensionality}   & {Nonlinear Dynamics}                         & {Parameters}                                                  & Initial State                                      \\ \midrule
        \multirow{2}{*}{1} & \multirow{2}{*}{Van de Pol oscillator} & \multirow{2}{*}{2} & $\dot{x}_1 = x_2$                            & \multirow{2}{*}{$a=2$}                                        & \multirow{2}{*}{$x_i\in[-2,2]$ for $1\leq i\leq2$} \\
                           &                                        &                    & $\dot{x}_2 = a(1 - x_1^2)x_2-x_1$            &                                                               &                                                    \\\midrule
        \multirow{2}{*}{2} & \multirow{2}{*}{Single pendulum}       & \multirow{2}{*}{2} & $\dot{\theta} = \omega$                      & $L = 1$                                                       & $\theta\in[0,2\pi]$                                \\
                           &                                        &                    & $\dot{\omega} = -\frac{g}{L}\sin\theta$      & $g=9.8$                                                       & $\omega = 0$                                       \\\midrule
        3                  & Lorenz63 system                        & {3}                & Equation~\eqref{eq:Lorenz63}                 & Section~\ref{sec:SimulationConfig}                            & $x_i\in[-10,10]$ for $1\leq i\leq 3$               \\ \midrule
        \multirow{3}{*}{5} & \multirow{3}{*}{YNI model}             & \multirow{3}{*}{7} & $\dot{V} =-(I_{Na} + I_s + I_h + I_K + I_l)$ & \multirow{3}{*}{Details found in~\cite{doi2010computational}} & $V\in[-70,-50]$                                    \\
                           &                                        &                    & $\dot{x} = \alpha_{x}(V)(1-x)-\beta_x(V)x$   &                                                               & $x=0$                                              \\
                           &                                        &                    & $(x=m,h,d,f,q,p)$                            &                                                               & $(x=m,h,d,f,q,p)$                                  \\ \midrule
        \multirow{3}{*}{4} & \multirow{3}{*}{Double pendulum}       & \multirow{3}{*}{4} & $\dot{\theta}_1= \omega_1$                   & $L_1=0.172,L_2=0.143$                                         & \multirow{3}{*}{Not Applicable}                    \\
                           &                                        &                    & $\dot{\theta}_2 = \omega_2     $             & $m_1=0.311,m_2=0.111$                                         &                                                    \\
                           &                                        &                    & Equation~\eqref{eq:DoublePendulum}           & $g=9.8$                                                       &                                                    \\
        \bottomrule
    \end{tabular}
    \caption{Nonlinear systems for evaluating the performance of \textsc{FedAvg-M}.}
    \vspace{-1em}
    \label{tab:SystemConsidered}
\end{table*}
\begin{figure*}[b]
    \normalsize
    \vspace{-1em}
    \hrulefill
    \begin{equation}\label{eq:DoublePendulum}
        \begin{split}
            \dot{\omega}_1 = & \frac{L_1m_2\omega_1^2\sin(\theta_2-\theta_1)\cos(\theta_2-\theta_1)
                + gm_2\sin\theta_2\cos(\theta_2-\theta_1)
                + L_2m_2\omega_2^2\sin(\theta_2-\theta_1)
            - g(m_1+m_2)\sin\theta_1}{L_1(m_1+m_2) - L_1m_2\cos(\theta_2-\theta_1)\cos(\theta_2-\theta_1)} \\
            \dot{\omega}_2 = & \frac{-L_2m_2\omega_2^2\sin(\theta_2-\theta_1)\cos(\theta_2-\theta_1)
                + (m_1+m_2)(g\sin\theta_1\cos(\theta_2-\theta_1)
                - L_1\omega_1^2\sin(\theta_2-\theta_1)
                - g\sin\theta_2)}{L_2(m_1+m_2) - L_2m_2\cos(\theta_2-\theta_1)\cos(\theta_2-\theta_1)}
        \end{split}
    \end{equation}
    % \hrulefill
    % \vspace*{2pt}
\end{figure*}
First, Fig.~\ref{fig:System} presents results for three canonical nonlinear systems, specifically the first three listed in Table~\ref{tab:SystemConsidered}, with data obtained by numerically solving the corresponding ODEs. To eliminate the impact of Kalman filtering, we temporarily assume that each client has access to the system state $\bm{x}_k$ and uses these states in training. As observed, \textsc{FedAvg-M} effectively trains the DKNs at the central server for the nonlinear systems under consideration. Notably, the test loss increases as the dimensionality of the nonlinear system grows. Furthermore, when comparing the test losses for the Van der Pol oscillator and the single pendulum, we observe that the nonlinearity introduced by trigonometric functions presents greater challenges for linearization.

To assess the scalability of KF-FedKL to high-dimensional systems, we consider the Yanagihara-Noma-Irisawa (YNI) model of sinoatrial node cells, a classical Hodgkin-Huxley-type model of cardiac cells. The YNI model corresponds to the fourth nonlinear system in Table~\ref{tab:SystemConsidered}. For brevity, we omit the model details, which are available in~\cite{doi2010computational}. The dataset used in this experiment is generated by numerically solving the corresponding ODEs. We set the noise covariances as $\Sigma_{f,n} = \Sigma_{h,n} = 0.0025\bm{I}$ and use the identity observation function $h_n = \bm{I}$ for $1 \leq n \leq N$. The results are visualized in Fig.~\ref{fig:Application}, which shows that the training converges quickly and yields low prediction error, indicating that KF-FedKL can efficiently learn a linear model that captures the nonlinear dynamics in a high-dimensional system.

\begin{figure*}[t]
    \centering
    \begin{minipage}{0.33\textwidth}
        \includegraphics[width=\textwidth]{./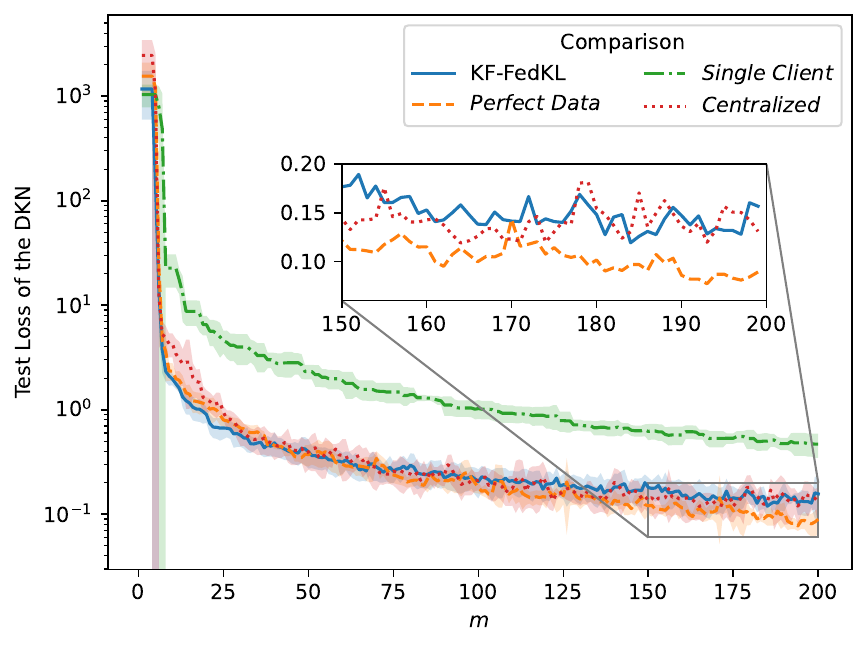}
        \subcaption{Test loss comparison.}
        \label{fig:comparison_loss}
    \end{minipage}
    \begin{minipage}{0.33\textwidth}
        \includegraphics[width=\textwidth]{./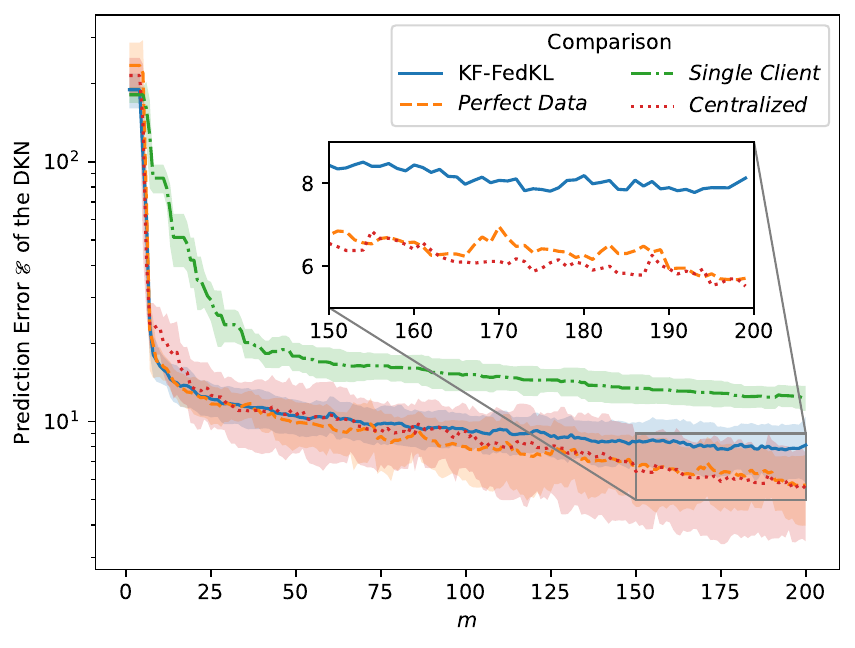}
        \subcaption{Prediction error comparison.}
        \label{fig:comparison_performance}
    \end{minipage}
    \begin{minipage}{0.285\textwidth}
        \includegraphics[width=\textwidth]{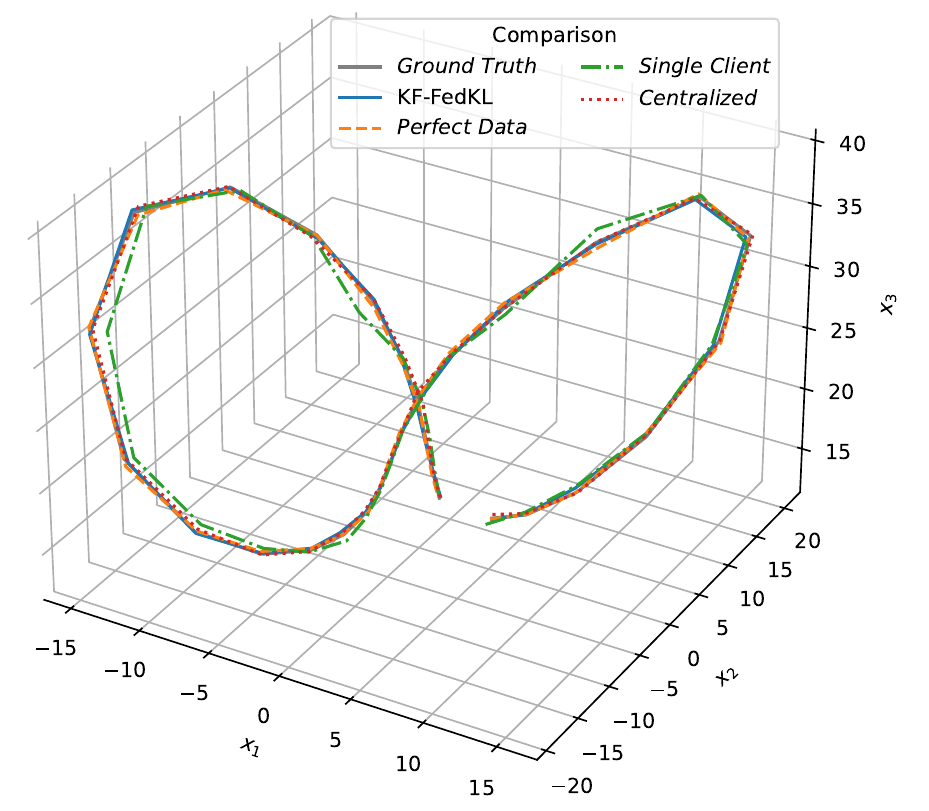}
        \subcaption{Predicted trajectory comparison.}
        \label{fig:comparison_trajectory}
    \end{minipage}
    \caption{Comparison of KF-FedKL with benchmark methods. In (a) and (b), lines represent the average of five independent simulations, and the shaded areas represent the standard deviations. In (c), we visualize trajectory segments predicted by the DKN using one-step predictions for each scheme, and compare them against the ground truth.}
    \vspace{-1em}
    \label{fig:framwork_compare}
\end{figure*}

To demonstrate the application of KF-FedKL on real-world systems, we use the dataset \texttt{Video\_Tracking\_Data.zip} provided in~\cite{myers2020low}, which contains real measured data from a double pendulum system. The dynamics and parameters of the double pendulum are specified as the fifth nonlinear system in Table~\ref{tab:SystemConsidered}. The dataset includes data from $3$ independent trials, each consisting of $38827$ samples recorded at a sampling rate of $500$Hz. However, we use only the samples indexed from $1000$ to $6000$, as the pendulums begin to move around sample $1000$ and the motion has largely subsided by sample $6000$. The excluded samples are noninformative and harmful to training according to our preliminary experiments. The dataset provides only the angular displacement $\theta_k$ of the two pendulums. To obtain the angular velocity $\omega_k$, we compute $\omega_k = \frac{\theta_{i_k} - \theta_{j_k}}{0.002(j_k-i_k)}$ where $i_k = \max\{0,k-200\}$ and $j_k=\min\{k+200,38827\}$. To accommodate the characteristics of the system and dataset, we make the following parameter choices. The observation function used by client $i$ is defined as $h_i = \bm{I}_{4 \times 4} + \Xi_i$, where $\Xi_i$ has the same dimensions as $\bm{I}_{4 \times 4}$, and each element is chosen randomly over the interval $[0, 0.5]$. The noise covariances are set to $\bm{\Sigma}_{f,n} = \bm{\Sigma}_{h,n} = \operatorname{diag}([0.01, 1, 0.01, 1])$ to account for model inaccuracies due to factors such as air resistance and friction in the mechanical joints. We simulate the process for $50$ time slots, with each successful observation in a single time slot obtaining data corresponding to $100$ samples. Each local training includes $25$ epochs. All other parameters follow the settings provided in Table~\ref{tab:ParameterChoice}. Samples from the first two trials are used for training. These samples are treated as system states in~\eqref{eq:noisy_state} prior to the observation and are not accessible to the clients. Samples indexed from $1000$ to $3000$ in the third trial are used for testing. The results are presented in Fig.~\ref{fig:realdata}. As shown, KF-FedKL effectively learns the nonlinear dynamics of the underlying physical system using real measured data. In particular, both the test loss and the prediction error of the DKN at the central server converge to low levels, demonstrating the effectiveness of the KF-FedKL when applied to real-world physical systems.

In the following, we compare the test losses and prediction errors achieved by the DKN at the central server under KF-FedKL with those achieved using the following benchmark schemes.
\begin{enumerate}
    \item \textit{Single Client}: The client trains its model independently without collaborating with other clients. It follows the same training pattern as in KF-FedKL, where training begins once the data accumulates to a threshold, and the used data is discarded after training.
    \item \textit{Perfect Data}: The clients have access to perfect data instead of using the output of the Kalman Filter. For simulations, the training data is obtained by numerically solving the corresponding ODEs.
    \item \textit{Centralized}: The clients transmit their data to the central server instead of training their local models. Specifically, at each time slot, the client's data is transmitted once it accumulates to a threshold and is discarded after being sent to the central server. At the central server, the received data is used for training during each time slot and is discarded once the training is completed. If no data is received, the central server stays idle.
\end{enumerate}
The results are provided in Fig.~\ref{fig:framwork_compare}. As we can see from Fig.~\ref{fig:comparison_loss}, the test loss under the \textit{Single Client} scheme is larger than that under the other three schemes. This is because, for a single client, the data is limited, and as a result, the DKN fails to effectively capture the characteristics of the nonlinear system. In contrast, the other three schemes achieve similar levels of test losses, with the \textit{perfect data} scheme achieving the lowest. Therefore, we can conclude that while the error introduced by the Kalman filter does impact training, its effect is limited. Furthermore, KF-FedKL demonstrates performance comparable to a centralized training scheme in terms of test loss. From Fig.~\ref{fig:comparison_performance}, we can observe that the \textit{Single Client} scheme exhibits the poorest performance. Conversely, the \textit{perfect data} and \textit{centralized} schemes achieve the best performance in terms of prediction error. However, even with a similar test loss, KF-FedKL results in a higher prediction error. This is because the test loss evaluates the one-step prediction error, while $\mathscr{E}$ assesses the multistep error. In Fig.~\ref{fig:comparison_trajectory}, we visualize trajectory segments predicted by the DKN using one-step predictions for each scheme, and compare them against the ground truth. These results are consistent with those presented in Fig.~\ref{fig:comparison_loss} and Fig.~\ref{fig:comparison_performance}. Specifically, the \textit{Single Client} scheme shows the poorest performance, while KF-FedKL delivers prediction accuracy comparable to those of the \textit{perfect data} scheme and \textit{centralized} scheme, demonstrating its effectiveness.

Additionally, we compare the performance of the UKF adopted in KF-FedKL with that of the following state estimation methods to demonstrate its efficiency.
\begin{itemize}
    \item \textit{UKF with resampling}: A variant of the UKF in which a new set of sigma points is sampled from the results of the prediction step prior to the correction step. This approach is particularly effective in scenarios with large or non-additive processes and observation noise.
    \item \textit{Extended Kalman filter~\cite{chui2017kalman}}: A classical nonlinear extension of the Kalman filter, often used as a benchmark. While computationally efficient, the extended Kalman filter may perform poorly in highly nonlinear systems.
    \item \textit{Particle filter~\cite{djuric2003particle}}: A sequential Monte Carlo method that approximates the posterior distribution using a set of weighted particles. This method typically incurs a higher computational cost.
\end{itemize}
The evaluation is conducted using $10$ system trajectories, each initialized with a randomly and independently generated initial state. The results are reported in Table~\ref{tab:estimation_performance}.
\begin{table}[t]
    \centering
    \begin{tabular}{c c c}
        \toprule
        \multirow{2}{*}{\textbf{State Estimation Methods}} & \textbf{$\Sigma_{f} = \Sigma_{h} = \bm{I}$} & \textbf{$\Sigma_{f} = \Sigma_{h} = 4\bm{I}$} \\
                                                           & \textbf{$\mathcal{E}\pm\mathcal{V}$}        & \textbf{$\mathcal{E}\pm\mathcal{V}$}         \\
        \midrule
        UKF                                                & $\mathbf{0.6755}\pm 0.3397$                 & $1.3686\pm 0.6850$                           \\
        UKF with resampling                                & $0.7173\pm 0.3650$                          & $\mathbf{1.3653}\pm 0.7091$                  \\
        Extended Kalman filter                             & $2.1662\pm0.8122$                           & $3.0394\pm1.5896$                            \\
        Particle filter                                    & $0.6814\pm\mathbf{0.3241}$                  & $1.3843\pm\mathbf{0.6324}$                   \\
        \bottomrule
    \end{tabular}
    \caption{Performance of various state estimation methods.}
    \label{tab:estimation_performance}
    \vspace{-1em}
\end{table}
In the table, $\mathcal{E}$ denotes the estimation error averaged over the final $100$ time slots and the $10$ trajectories. The error of a single estimation is defined as $\frac{1}{d_x} \left|\bm{x} - \hat{\bm{x}} \right|_1$, where $\bm{x}$ is the true system state, $\hat{\bm{x}}$ is the estimated state, and $|\cdot|_1$ denotes the $\ell^1$ norm. The symbol $\mathcal{V}$ in the table represents the standard deviation of the estimation error across the $10$ trajectories, averaged over the final $100$ time slots. As shown, when the process and observation noise levels are low (i.e., $\Sigma_{f} = \Sigma_{h} = \bm{I}$), UKF achieves the lowest overall estimation error. In this case, the resampling step offers no additional benefit, and the \textit{particle filter} performs comparably but with a slightly lower variance. However, the computational cost of \textit{particle filter} is significantly higher than that of the other methods considered. The \textit{extended Kalman filter} demonstrates the worst performance. Under high noise conditions (i.e., $\Sigma_{f} = \Sigma_{h} = 4\bm{I}$), \textit{UKF with resampling} slightly outperforms the other methods, though the improvement over the UKF is marginal. The \textit{particle filter} remains competitive, while the \textit{extended Kalman filter} continues to yield the poorest results. These findings support the selection of UKF as an effective and computationally efficient solution under the problem setting considered in this paper.

Finally, we investigate the impact of several key system parameters on the performance of KF-FedKL. We begin with an ablation study to highlight the critical role of each loss function used in~\eqref{eq:CompositeLoss}. To this end, we evaluate various combinations of the loss functions and perform five independent simulations for each setting. The results are presented in Table~\ref{tab:ablation_study}.
\begin{table}[t]
    \centering
    \begin{tabular}{cccc}
        \toprule
        \textbf{$\left[w_1,w_2,w_3\right]$}                & \textbf{$\bar{\ell}_1$} & \textbf{$\bar{\ell}_2$} & \textbf{$\bar{\ell}_3$} \\
        \midrule
        $\left[\frac{1}{3},\frac{1}{3},\frac{1}{3}\right]$ & $0.034$                 & $0.021$                 & $0.080$                 \\
        $\left[\frac{1}{2},\frac{1}{2},0\right]$           & $0.007$                 & $0.002$                 & $0.685$                 \\
        $\left[\frac{1}{2},0,\frac{1}{2}\right]$           & $0.027$                 & $1.530$                 & $0.086$                 \\
        $\left[0,\frac{1}{2},\frac{1}{2}\right]$           & $76.875$                & $0.014$                 & $0.078$                 \\
        $\left[0,0,1\right]$                               & $98.285$                & $1281.532$              & $0.080$                 \\
        $\left[0,1,0\right]$                               & $120.478$               & $0.002$                 & $274.563$               \\
        $\left[1,0,0\right]$                               & $8.270\times10^{-5}$    & $283.215$               & $283.957$               \\
        \bottomrule
    \end{tabular}
    \caption{Ablation study on loss function~\eqref{eq:CompositeLoss}.}
    \vspace{-1.5em}
    \label{tab:ablation_study}
\end{table}
In the table, $\bar{\ell}_i$ denotes the test loss corresponding to the $i$-th loss function, averaged over the final $50$ time slots and the $5$ independent simulations. As shown, excluding one or two loss functions leads to significantly higher test loss for the omitted ones, indicating that the DKN fails to fulfill the corresponding learning objectives. While it may appear that excluding certain loss functions yields improved performance on the included ones, this comes at the cost of overlooking important physical or structural aspects of the DKN. Therefore, despite marginal gains in selected loss functions, the omission of any loss function compromises the physical meaning of the DKN. This underscores the indispensability of all three loss functions for achieving a physically meaningful DKN.

\begin{figure*}[t]
    \centering
    \begin{minipage}{0.32\textwidth}
        \includegraphics[width=\textwidth]{./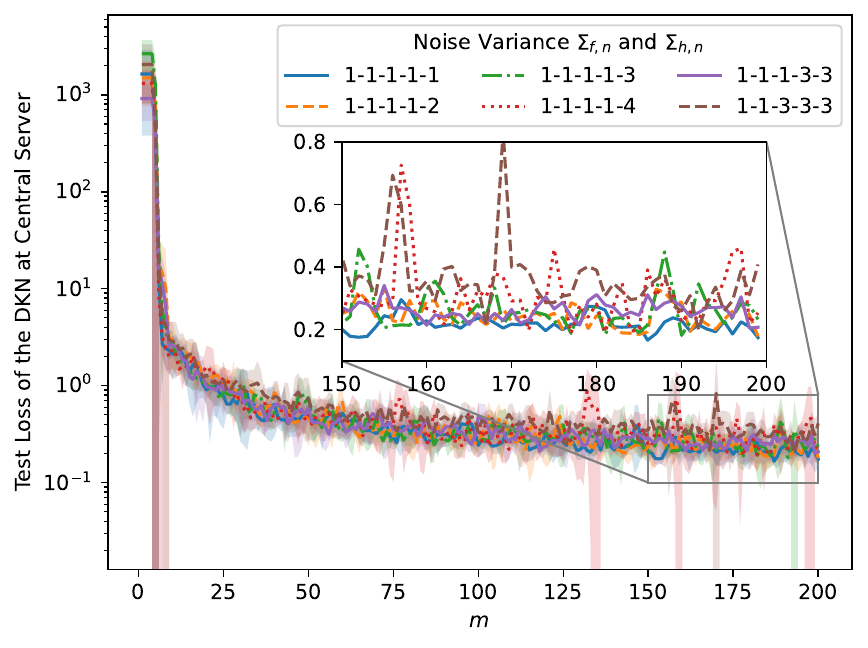}
        \subcaption{Noise variance $\Sigma_{f,n}$ and $\Sigma_{h,n}$.}
        \label{fig:Noise}
    \end{minipage}
    \begin{minipage}{0.32\textwidth}
        \includegraphics[width=\textwidth]{./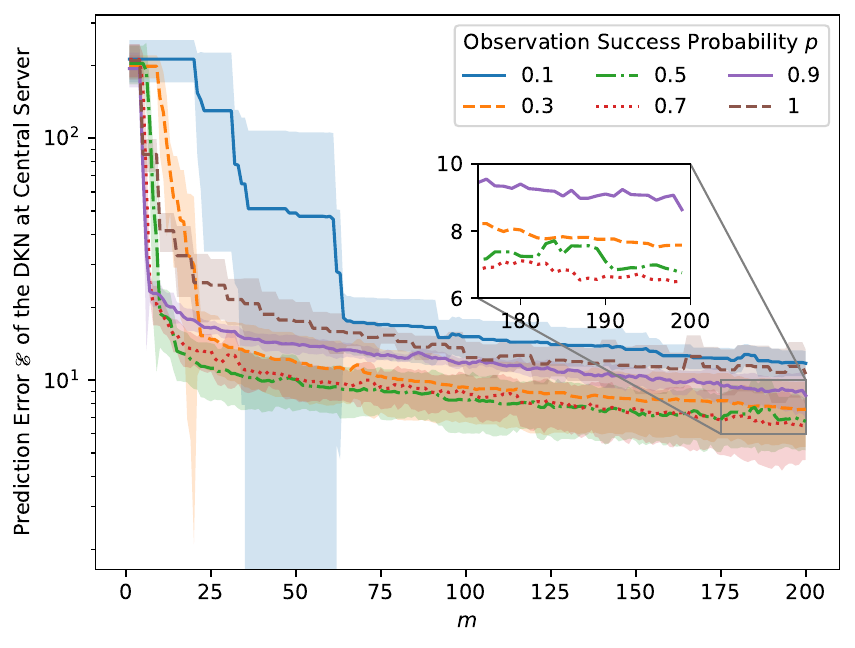}
        \subcaption{Observation success probability $p$.}
        \label{fig:Probability}
    \end{minipage}
    \begin{minipage}{0.32\textwidth}
        \includegraphics[width=\textwidth]{./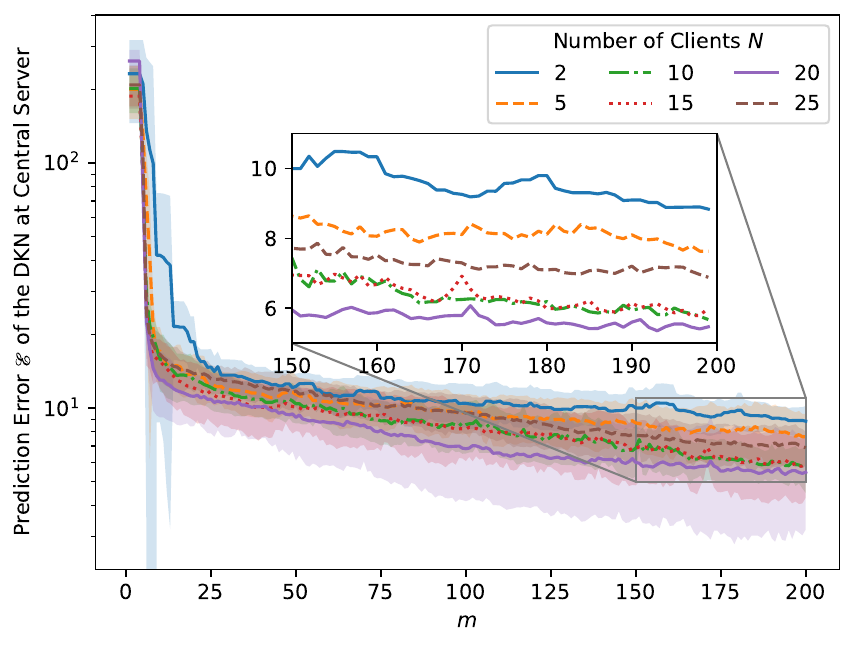}
        \subcaption{Client number $N$.}
        \label{fig:ClientNumber}
    \end{minipage}

    \vspace{0.5em}

    \begin{minipage}{0.32\textwidth}
        \includegraphics[width=\textwidth]{./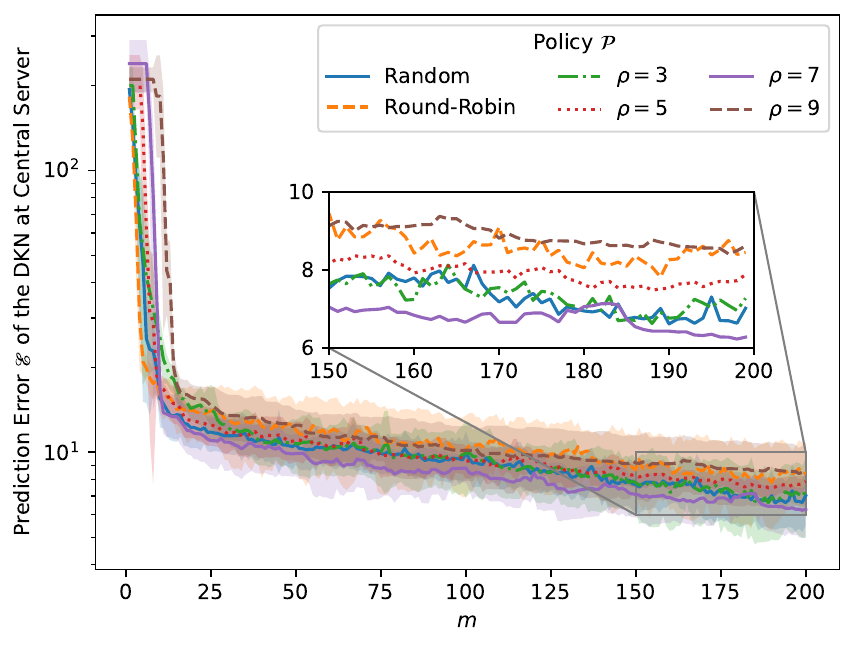}
        \subcaption{Policy $\mathcal{P}$.}
        \label{fig:Policy}
    \end{minipage}
    \begin{minipage}{0.32\textwidth}
        \includegraphics[width=\textwidth]{./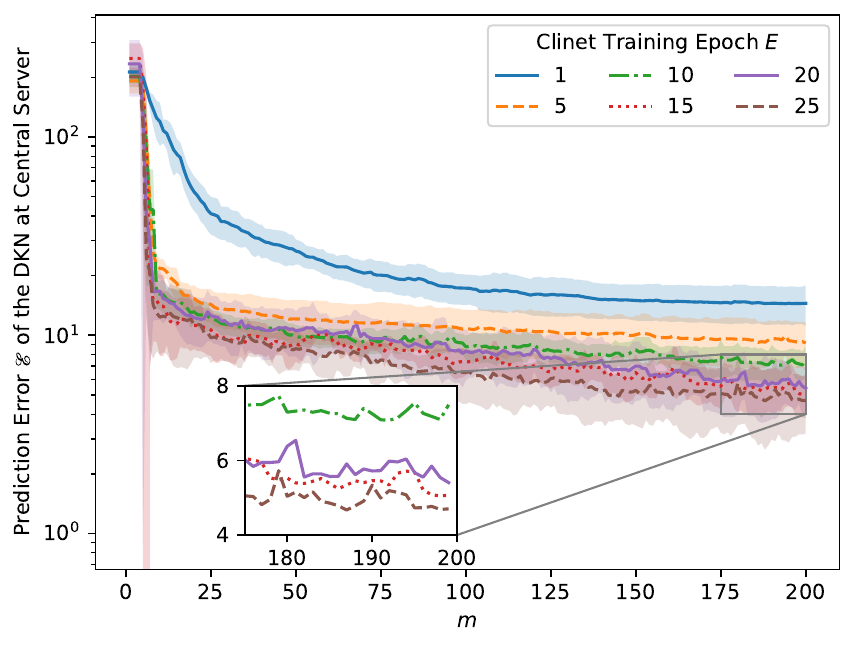}
        \subcaption{Training epoch $E$.}
        \label{fig:TrainingEpochs}
    \end{minipage}
    \begin{minipage}{0.32\textwidth}
        \includegraphics[width=\textwidth]{./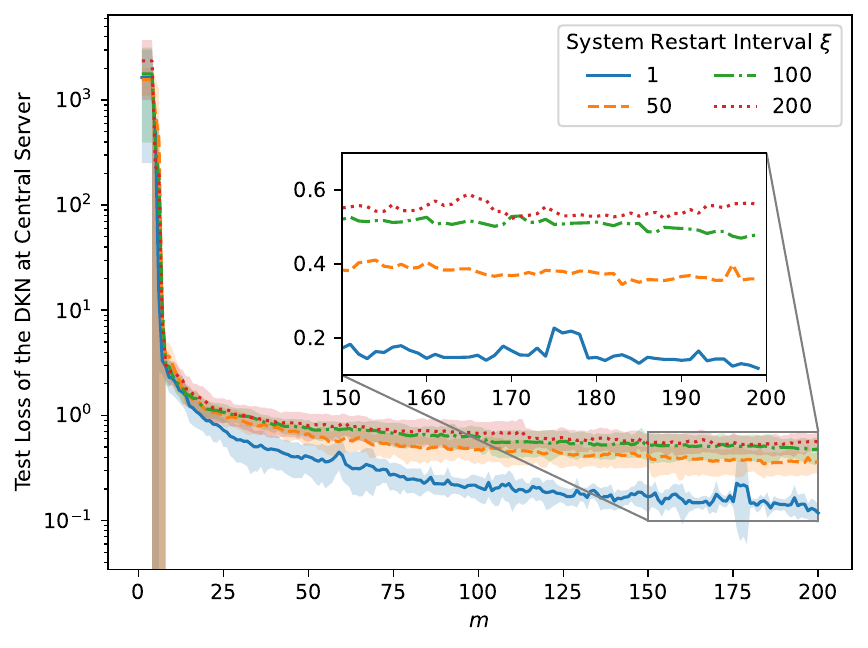}
        \subcaption{System restart interval $\mathcal{\xi}$.}
        \label{fig:Diversity}
    \end{minipage}
    \caption{The impact of system parameters on the performance. Lines represent the average of five independent simulations, and the shaded areas represent the standard deviations.}
    \vspace{-1em}
    \label{fig:parameters}
\end{figure*}
Then, we investigate the effect of the client's noise level $\Sigma_{f,n}$ and $\Sigma_{h,n}$ on the convergence. The results are visualized in Fig.~\ref{fig:Noise}, where the legend \textit{"1-1-1-1-3"} indicates four clients with $\Sigma_{f,n}=\Sigma_{h,n}=\bm{I}$ and one client with $\Sigma_{f,n}=\Sigma_{h,n}=3\bm{I}$. Note that the noise level can indirectly reflect the magnitude of errors in the estimated states. From the figure, we can see that higher noise or a greater number of noisy clients leads to increased fluctuations in test loss. This is because the excessive noise hinders the model's ability to accurately capture the nonlinear dynamics of the underlying system. Consequently, when such models are aggregated at the central server, they adversely impact the overall convergence. Nevertheless, this impact is mitigated to some extent by the averaging in federated learning, especially when aggregation is dominated by clients with lower noise levels. It is also important to note that in extreme scenarios where all clients experience high noise levels, the training fails to converge, which is not shown in the figure.

Fig.~\ref{fig:Probability} shows the performance when the clients adopt different probabilities $p$ of successful observation. In the simulation, we assume that all clients adopt the same probability. Specifically, $p_n = p$ for all $n$. The following observations can be made from the figure. The prediction error $\mathscr{E}$ first decreases and then increases as $p$ increases. The initial decrease occurs because the rate of data accumulation for clients becomes faster as $p$ increases. Therefore, the client participates in the training more often. At the same time, as $p$ reaches a certain value, the prediction error of the DKN at the central server increases. This is because the rapid accumulation of data by the clients and their frequent participation in training can potentially lead to overfitting. At the same time, the clients use estimated system states in training, whereas the test data is the true system states from numerical solutions. Consequently, the potential overfitting can lead to a DKN that is not generalized well, which in turn degrades the performance when tested on the true system state.

In Fig.~\ref{fig:ClientNumber}, we plot the performance of the DKN at the central server as the number of clients increases. The following observations can be made from the figure. The prediction error $\mathscr{E}$ first decreases and then increases as the number of clients increases. The initial decrease occurs because, under the threshold-based policy, more clients participate in each training round as the total number of clients increases. Additionally, when $N=25$, the prediction error increases. To explain this, we recall that the data used in training is the estimated system states. Hence, as the number of clients increases, the potential overfitting will result in a close fit between the output of the DKN and the estimated system states. Hence, the prediction error increases when the true system states from numerical solutions are used in the evaluation.

In Fig.~\ref{fig:Policy}, we investigate the performance resulting from the adoption of different policies.
\begin{definition}[Random policy]
    Under the Random policy, the central server randomly activates a single client in each time slot. Specifically, in time slot $m$ and for $1\leq n\leq N$,
    \begin{equation}
        \Pr\left\{\mathcal{I}_m=\{n\}\right\} = \frac{1}{N},
    \end{equation}
    where $\{n\}$ represents the set consisting of $n$ only.
\end{definition}
\begin{definition}[Round-Robin policy]
    Under the Round-Robin policy, the central server activates a single client in turn. Specifically, in time slot $m$,
    \begin{equation}
        \mathcal{I}_{m} = \{m\%N\},
    \end{equation}
    where $\%$ returns the remainder of the Euclidean division.
\end{definition}
\begin{definition}[Threshold policy $\rho$]
    Under Threshold policy $\rho$, the central server activates all the clients whose data size exceeds the threshold $\zeta\rho$. Specifically, in time slot $m$, $\mathcal{I}_{m} = \{n\mid |\mathcal{D}_n|\geq \zeta\rho\}$.
\end{definition}
\noindent We notice that there is little difference in the performance of the resulting DKN at the central server. Nevertheless, there are still notable phenomena. The threshold policy with $\rho=7$ yields the best performance. However, increasing or decreasing the threshold, which respectively incorporates more or fewer data in each training round, results in worse performance. The reason is that a higher threshold leads to more time slots where no clients are activated, which slows down the performance improvement. Conversely, decreasing the threshold results in less training data per training round, which slows down the generalization. It is important to note that many policies are not considered in this paper, such as the threshold policy with client-specific thresholds. Hence, identifying the optimal policy with respect to a specific performance measure is another optimization problem that requires further investigation.

In Fig.~\ref{fig:TrainingEpochs}, we examine the impact of the number of local training epochs. As shown, increasing the number of training epochs on the client side leads to improved performance. This is because more training epochs result in better generalization of the DKN. However, this benefit comes at the cost of increased training time. Several strategies can be employed to accelerate the training. These include adopting asynchronous federated learning~\cite{xu2023asynchronous} to address straggler issues, enhancing communication efficiency~\cite{konevcny2016federated}, and incorporating advanced optimization methods such as natural gradient descent~\cite{amari1998natural} and adaptive federated optimization~\cite{reddi2020adaptive}. In addition, system-level improvements, including parallel implementation, can further accelerate training. The implementation of these enhancements is an important direction for future work.

Motivated by the scenario where the real-world dataset exhibits limited diversity due to the inclusion of samples from only three independent trials, we further investigate the impact of data diversity on performance. To this end, we simulate a scenario in which the nonlinear system restarts every $\xi$ time slots, with larger values of $\xi$ corresponding to lower data diversity. As shown in Fig.~\ref{fig:Diversity}, the test loss consistently converges to a relatively small value across different settings. This is because the Lorenz63 system has similar trajectory shapes for different initial states, enabling the DKN to capture essential system dynamics even with reduced data diversity. However, when data becomes less diverse, test loss increases, as the reduced data diversity means a loss of information about the underlying system.

\section{Conclusion}
In this paper, we present a pioneering framework, Kalman Filter aided Federated Koopman Learning (KF-FedKL), which combines Kalman filter and federated learning with Koopman learning. KF-FedKL is designed for scenarios where multiple clients observe a common nonlinear system and collaborate for linearization purposes. To estimate the system state from the observation data, we leverage UKF and URTS smoother. Then, the clients collaboratively train the DKN under the federated learning framework using their estimated system states to achieve linearization. In training, we introduce a straightforward yet effective loss function to drive the training. We also study the case where the observation data arrives stochastically. To mitigate the effect of stochastic arrival, we propose the \textsc{FedAvg-M} algorithm. A convergence analysis is also presented to provide insights into the performance of KF-FedKL. Finally, extensive numerical results are presented to highlight the performance of KF-FedKL under various parameters.

\appendices
\section{Proof of Theorem~1}\label{sec:ConvergenceAnalysis}
Let $\mathcal{C}_E=\{nE\mid n = 1,2,...\}$ be the set of time slots where the communication happens. Then, we define
\begin{equation}
    \omega^n_t \triangleq \begin{cases}
        \nu^n_t                              & t\notin\mathcal{C}_E\ \text{or}\ |\mathcal{I}_t|=0, \\
        \sum_{n\in\mathcal{I}_t}q^n_t\nu_t^n & t\in\mathcal{C}_E\ \text{and}\ |\mathcal{I}_t|>0,
    \end{cases}
\end{equation}
where $q^n_t$ is the weights assigned to each client at time slot $t$ and, under the policy we adopted, $q^n_t = \frac{1}{|\mathcal{I}_t|}$ for $n\in\mathcal{I}_t$. We also define $\omega_t \triangleq \frac{1}{N}\sum_{n=1}^N\omega_t^n$ and $\nu_t \triangleq \frac{1}{N}\sum_{n=1}^N\nu_t^n$. Note that $\omega_t$ and $\nu_t$ are both virtual sequences as introduced in~\cite{FLConvergenceAnalysis0}. When $t\notin\mathcal{C}_E$, both are inaccessible. When $t\in\mathcal{C}_E$, only $\omega_t$ is accessible and is equivalent to the actual model at the central server. Then, we investigate
\begin{equation}\label{eq:ConAnaTarget}
    \begin{split}
        ||\omega_{t+1}-\omega^*||_2^2 = & ||\omega_{t+1} - \nu_{t+1} + \nu_{t+1} - \omega^*||_2^2           \\
        =                               & ||\omega_{t+1} - \nu_{t+1}||^2_2 + ||\nu_{t+1} - \omega^*||^2_2 + \\
                                        & 2\langle\omega_{t+1} - \nu_{t+1}, \nu_{t+1} - \omega^*\rangle,
    \end{split}
\end{equation}
where $\omega^*$ is the target model. In the following, we analyze each term in~\eqref{eq:ConAnaTarget} when the expectation is taken, resulting in the following lemmas.
\begin{lemma}\label{lem:term1}
    For $t\geq0$,
    \begin{equation}
        \mathbbm{E}\left[||\omega_{t+1} - \nu_{t+1}||_2^2\right] \leq \sum_{k=1}^NC_kE^2\eta_{t}^2\left(G^2+\sigma^2\right),
    \end{equation}
    where
    \begin{equation}
        C_k\triangleq\frac{4(N-k)N!p^k(1-p)^{N-k}}{k(N-1)k!(N-k)!}.
    \end{equation}
\end{lemma}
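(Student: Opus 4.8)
The plan is to exploit the fact that the virtual sequences $\omega_{t+1}$ and $\nu_{t+1}$ coincide except at aggregation rounds that activate at least one client. Indeed, whenever $t+1\notin\mathcal{C}_E$ or $|\mathcal{I}_{t+1}|=0$, the definition of $\omega^n_{t+1}$ gives $\omega^n_{t+1}=\nu^n_{t+1}$ for every $n$, so $\omega_{t+1}=\nu_{t+1}$ and the left-hand side vanishes. It therefore suffices to treat $t+1\in\mathcal{C}_E$ with $k\triangleq|\mathcal{I}_{t+1}|\geq1$, where every client receives the common aggregate $\omega_{t+1}=\frac{1}{k}\sum_{j\in\mathcal{I}_{t+1}}\nu^j_{t+1}$. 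Since each client is activated independently with probability $p$, the cardinality $k$ is binomially distributed with parameters $N$ and $p$; I would condition on $k$ first and average over it only at the very end. This random cardinality is exactly the first point where the argument departs from~\cite{FLConvergenceAnalysis0}.

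Conditioned on $k$, the set $\mathcal{I}_{t+1}$ is a uniform size-$k$ subset of $\{1,\dots,N\}$, so $\omega_{t+1}$ is the mean of a simple random sample drawn without replacement from the population $\{\nu^1_{t+1},\dots,\nu^N_{t+1}\}$, whose population mean is $\nu_{t+1}$. The next step is the classical finite-population variance identity, which I would derive by writing $\omega_{t+1}-\nu_{t+1}=\frac{1}{k}\sum_{j}Z_j(\nu^j_{t+1}-\nu_{t+1})$ with selection indicators $Z_j$, expanding the square, and invoking $\mathbbm{E}[Z_j]=k/N$, $\mathbbm{E}[Z_iZ_j]=\frac{k(k-1)}{N(N-1)}$ together with $\sum_j(\nu^j_{t+1}-\nu_{t+1})=\bm{0}$. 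This produces
\begin{equation}
    \mathbbm{E}\big[||\omega_{t+1}-\nu_{t+1}||_2^2\,\big|\,k\big]=\frac{N-k}{k(N-1)}\cdot\frac{1}{N}\sum_{n=1}^N||\nu^n_{t+1}-\nu_{t+1}||_2^2,
\end{equation}
whose prefactor $\frac{N-k}{k(N-1)}$ already matches the $k$-dependent part of $C_k$.

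It then remains to bound the population variance $S^2\triangleq\frac{1}{N}\sum_n||\nu^n_{t+1}-\nu_{t+1}||_2^2$. Here I would use that the mean minimizes the average squared distance to replace $\nu_{t+1}$ by the last synchronized model $\omega_{t_0}$ with $t_0=t+1-E$, giving $S^2\leq\frac{1}{N}\sum_n||\nu^n_{t+1}-\omega_{t_0}||_2^2$. Unrolling the recursion~\eqref{eq:nuDefinition} writes $\nu^n_{t+1}-\omega_{t_0}=-\sum_{\tau=t_0}^{t}\eta_\tau(\nabla\mathcal{L}^n_\tau(\omega^n_\tau;\xi^n_\tau)+e^n_\tau)$; applying Jensen's inequality to pull the norm inside the $E$-term sum and taking expectations so that the zero-mean error $e^n_\tau$ decouples from the stochastic gradient, Assumption~\ref{ass:4} and $\mathbbm{E}[||e^n_\tau||_2^2]\leq\sigma^2$ yield $\mathbbm{E}[||\nu^n_{t+1}-\omega_{t_0}||_2^2]\leq E\sum_{\tau=t_0}^{t}\eta_\tau^2(G^2+\sigma^2)\leq E^2\eta_{t_0}^2(G^2+\sigma^2)$. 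The learning-rate hypotheses close this estimate: monotonicity and $\eta_{t_0}=\eta_{t+1-E}\leq2\eta_{t+1}\leq2\eta_t$ give $\eta_{t_0}^2\leq4\eta_t^2$, so $S^2\leq4E^2\eta_t^2(G^2+\sigma^2)$, which supplies the constant $4$ in $C_k$.

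Finally, I would take the expectation over $k$ against the binomial weights $\binom{N}{k}p^k(1-p)^{N-k}$, observing that the $k=0$ and $k=N$ terms contribute nothing, to obtain
\begin{equation}
    \mathbbm{E}\big[||\omega_{t+1}-\nu_{t+1}||_2^2\big]\leq\sum_{k=1}^N\binom{N}{k}p^k(1-p)^{N-k}\frac{N-k}{k(N-1)}\cdot4E^2\eta_t^2(G^2+\sigma^2)=\sum_{k=1}^NC_kE^2\eta_t^2(G^2+\sigma^2),
\end{equation}
which is the claim. I expect the principal obstacle to be the bookkeeping around the random cardinality $k$: one must carry the finite-population correction through the conditional expectation and recombine it correctly with the binomial probabilities, since a naive treatment that fixes the number of participants would omit the $\frac{N-k}{k(N-1)}$ weighting and thus the precise form of $C_k$. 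A secondary subtlety is justifying the decoupling of the gradient error $e^n_\tau$, which hinges on its zero mean and its independence from the stochastic gradient.
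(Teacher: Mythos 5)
Your proposal matches the paper's own proof in every essential respect: the same case split on $t+1\in\mathcal{C}_E$, the same conditioning on $|\mathcal{I}_{t+1}|=k$ with the selection-indicator computation yielding the finite-population factor $\frac{N-k}{k(N-1)}$, the same replacement of $\nu_{t+1}$ by $\omega_{t_0}$ followed by unrolling the SGD recursion to get $4E^2\eta_t^2(G^2+\sigma^2)$, and the same final averaging over the binomial distribution of $k$. The argument is correct and essentially identical to the paper's.
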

\begin{proof}
    When $t+1\notin\mathcal{C}_E$, $\mathbbm{E}\left[||\omega_{t+1} - \nu_{t+1}||_2^2\right] = 0$ since $\omega_{t+1} = \nu_{t+1}$ by definition. For the case of $t+1\in\mathcal{C}_E$, we have
    \begin{equation}
        \omega_{t+1} = \frac{1}{N}\sum_{n=1}^N\left(\sum_{n\in\mathcal{I}_{t+1}}q^n_{t+1}\nu_{t+1}^n\right) = \frac{1}{|\mathcal{I}_{t+1}|}\sum_{n\in\mathcal{I}_{t+1}}\nu_{t+1}^n.
    \end{equation}
    Then, we investigate the expected value of $||\omega_{t+1} - \nu_{t+1}||_2^2$ given that exactly $k>0$ clients are selected at time slot $k+1$. To this end, we have
    \begin{equation}
        \begin{split}
            A_k \triangleq & \mathbbm{E}\left[||\omega_{t+1} - \nu_{t+1}||_2^2 \,\middle|\, |\mathcal{I}_{t+1}|=k\right]                                                                                         \\
            =              & \frac{1}{k^2}\mathbbm{E}\left[\left\|\sum_{n=1}^N\mathbbm{1}_{\{n\in\mathcal{I}_{t+1}\}}\left(\nu_{t+1}^{n} - \nu_{t+1}\right)\right\|^2_2\,\middle|\, |\mathcal{I}_{t+1}|=k\right] \\
            =              & \frac{1}{k^2}\Bigg[\sum_{n=1}^NP_1||\nu_{t+1}^{n} - \nu_{t+1}||_2^2 +                                                                                                               \\
                           & \hspace{4em} \sum_{n\neq o}P_2\langle\nu_{t+1}^{n} - \nu_{t+1},\nu_{t+1}^{o} - \nu_{t+1}\rangle\Bigg],
        \end{split}
    \end{equation}
    where $\mathbbm{E}\left[B\,\middle|\, A\right]$ denotes the expected value of $B$ given $A$, and, for $1\leq n,o\leq N$,
    \begin{equation}
        P_1\triangleq \Pr\left\{n\in\mathcal{I}_{t+1}\right\} = \frac{k}{N},
    \end{equation}
    \begin{equation}
        P_2\triangleq \Pr\left\{n\in\mathcal{I}_{t+1},o\in\mathcal{I}_{t+1}\right\} = \frac{k(k-1)}{N(N-1)}.
    \end{equation}
    We notice that the following equality holds.
    \begin{equation}
        \sum_{n=1}^N||\nu_{t+1}^{n} - \nu_{t+1}||_2^2 + \sum_{n\neq o}\langle\nu_{t+1}^{n} - \nu_{t+1},\nu_{t+1}^{o} - \nu_{t+1}\rangle = 0
    \end{equation}
    Then, combining with $P_1$ and $P_2$ yields
    \begin{equation}
        A_k = \frac{N-k}{k(N-1)}\sum_{n=1}^N\frac{1}{N}||\nu_{t+1}^{n} - \nu_{t+1}||_2^2.
    \end{equation}
    In the following, we bound the term $\sum_{n=1}^N\frac{1}{N}||\nu_{t+1}^{n} - \nu_{t+1}||_2^2$ when the expectation is taken over the randomness of stochastic gradient~\cite{wei2022federated}. To this end, we have
    \begin{equation}
        \begin{split}
            \frac{1}{N}\mathbbm{E}\Bigg[\sum_{n=1}^N & ||\nu_{t+1}^{n} - \nu_{t+1}||_2^2\Bigg]                                                                        \\
            =                                        & \frac{1}{N}\sum_{n=1}^N\mathbbm{E}\left[||\nu_{t+1}^{n} - \omega_{t_0} - \nu_{t+1} + \omega_{t_0}||_2^2\right] \\
            \leq                                     & \frac{1}{N}\sum_{n=1}^N\mathbbm{E}\left[||\nu_{t+1}^n - \omega_{t_0}||_2^2\right],
        \end{split}
    \end{equation}
    where $t_0=\max\{t_0\mid t_0\in\mathcal{C}_E, t_0< t+1\}$, and we used the inequality $\mathbbm{E}\left[||x-\mathbbm{E}x||^2_2\right]\leq\mathbbm{E}\left[||x||_2^2\right]$. Then, we have
    \begin{equation}
        \begin{split}
            \sum_{n=1}^N\frac{1}{N}\mathbbm{E} & \left[||\nu_{t+1}^n - \omega_{t_0}||^2_2\right]                                                                                      \\
                                               & \leq \sum_{n=1}^N\frac{1}{N}E\sum_{i=t_0}^t\mathbbm{E}\left[||\eta_i\nabla\mathcal{L}_{i}^n(\omega_i^n;\xi_i^n) + e_i^n||_2^2\right] \\
                                               & \leq 4E^2\eta_{t}^2\left(G^2+\sigma^2\right),
        \end{split}
    \end{equation}
    where we used the fact that $\eta_t$ is decreasing, $\eta_{t_{0}}\leq2\eta_t$, and $\sigma^2=\max\{\sigma_n^2\}$. Combining, we have
    \begin{equation}\label{eq:lemma1}
        \mathbbm{E}\left[A_k\right] \leq \frac{4(N-k)}{k(N-1)}E^2\eta_{t}^2\left(G^2+\sigma^2\right).
    \end{equation}
    We notice that, when $k=0$, $A_0=0$ since $\omega_{t+1} = \nu_{t+1}$. Then, we have
    \begin{equation}
        \begin{split}
            \mathbbm{E} \left[||\omega_{t+1} - \nu_{t+1}||_2^2\right] = \sum_{k=1}^N\Pr\left\{|\mathcal{I}_{t+1}|=k\mid\mathcal{I}_t\right\}\mathbbm{E}\left[A_k\right],
        \end{split}
    \end{equation}
    where
    \begin{equation}
        \Pr\left\{|\mathcal{I}_{t+1}|=k\mid\mathcal{I}_t\right\} = \frac{N!p^k(1-p)^{N-k}}{k!(N-k)!}.
    \end{equation}
    Finally, combining with~\eqref{eq:lemma1} concludes the proof.
\end{proof}
\begin{lemma}\label{lem:term2}
    For $t\geq0$,
    \begin{equation}
        \begin{split}
            \mathbbm{E}\left[||\nu_{t+1} - \omega^*||_2^2\right]\leq & (1-\mu\eta_t)||\omega_t -\omega^*||_2^2 +                     \\
                                                                     & \eta_t^2\left[8(E-1)^2G^2 + 6L\Gamma\right] +                 \\
                                                                     & \frac{1}{N^2}\sum_{n=1}^N\left(\delta_n^2 + \sigma_n^2\right)
        \end{split}
    \end{equation}
\end{lemma}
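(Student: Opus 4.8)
The plan is to follow the one-step descent analysis of the virtual averaged sequence used in~\cite{FLConvergenceAnalysis0}, extended to accommodate the additive gradient error $e_t^n$. First I would write $\nu_{t+1} = \omega_t - \eta_t g_t$, where $g_t \triangleq \frac{1}{N}\sum_{n=1}^N(\nabla\mathcal{L}_t^n(\omega_t^n;\xi_t^n) + e_t^n)$ is the averaged noisy stochastic gradient, and I would introduce the averaged full gradient $\bar{g}_t \triangleq \frac{1}{N}\sum_{n=1}^N\nabla\mathcal{L}_t^n(\omega_t^n)$. Using the decomposition
\begin{equation}
    ||\nu_{t+1}-\omega^*||_2^2 = ||\omega_t - \omega^* - \eta_t\bar{g}_t||_2^2 + 2\eta_t\langle\omega_t-\omega^*-\eta_t\bar{g}_t,\bar{g}_t - g_t\rangle + \eta_t^2||g_t - \bar{g}_t||_2^2,
\end{equation}
I would take the conditional expectation over the sampling and error randomness. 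Because the stochastic gradients are unbiased and $e_t^n$ has zero mean, $\mathbbm{E}[g_t]=\bar{g}_t$, so the cross term vanishes and only the first and third terms survive.

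Next I would bound the variance term $\eta_t^2\mathbbm{E}[||g_t-\bar{g}_t||_2^2]$. Exploiting independence across clients together with Assumption~\ref{ass:3}, $\mathbbm{E}[||e_t^n||_2^2]=\sigma_n^2$, and the independence of the sampling noise from the data error, this is at most $\frac{\eta_t^2}{N^2}\sum_{n=1}^N(\delta_n^2+\sigma_n^2)$, which supplies the variance/error term of the claim. For the remaining term I would expand $||\omega_t-\omega^*-\eta_t\bar{g}_t||_2^2 = ||\omega_t-\omega^*||_2^2 - 2\eta_t\langle\omega_t-\omega^*,\bar{g}_t\rangle + \eta_t^2||\bar{g}_t||_2^2$ and handle the inner-product term and the gradient-norm term separately.

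The core of the argument is the inner-product term. I would rewrite each summand $\langle\omega_t-\omega^*,\nabla\mathcal{L}_t^n(\omega_t^n)\rangle$ by inserting $\omega_t^n$ and splitting it as $\langle\omega_t-\omega_t^n,\nabla\mathcal{L}_t^n(\omega_t^n)\rangle + \langle\omega_t^n-\omega^*,\nabla\mathcal{L}_t^n(\omega_t^n)\rangle$. The second piece is handled by the $\mu$-strong convexity of Assumption~\ref{ass:2}, which produces the contraction factor $(1-\mu\eta_t)||\omega_t-\omega^*||_2^2$ along with favorable $-(\mathcal{L}_t^n(\omega_t^n)-\mathcal{L}_t^n(\omega^*))$ terms; the first piece I would control by an AM--GM split combined with the $L$-smoothness of Assumption~\ref{ass:1}. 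For the gradient-norm term $\eta_t^2||\bar{g}_t||_2^2$, I would apply Jensen's inequality over the clients followed by the smoothness bound $||\nabla\mathcal{L}_t^n(\omega_t^n)||_2^2 \leq 2L(\mathcal{L}_t^n(\omega_t^n)-\mathcal{L}_t^{n,*})$, so that the surplus function-value terms collect into the heterogeneity quantity $\Gamma = \mathcal{L}^* - \frac{1}{N}\sum_{n=1}^N\mathcal{L}^{n,*}$ and yield the $6L\Gamma\eta_t^2$ contribution.

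I expect the main obstacle to be controlling the client-drift terms $||\omega_t-\omega_t^n||_2^2$ that arise from the $\langle\omega_t-\omega_t^n,\cdot\rangle$ split. Writing $t_0$ for the most recent communication time, so that $t_0\leq t<t_0+E$, I would bound this drift by unrolling the local SGD updates over at most $E-1$ steps and invoking the bounded-gradient Assumption~\ref{ass:4}, together with the step-size conditions $\eta_t\leq\frac{1}{4L}$ and $\eta_t\leq2\eta_{t+E}$, to obtain a bound of order $(E-1)^2G^2\eta_t^2$. Assembling the contraction factor, the $8(E-1)^2G^2\eta_t^2$ drift contribution, the $6L\Gamma\eta_t^2$ heterogeneity contribution, and the variance/error term then yields the stated recursion.
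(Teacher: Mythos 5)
Your proposal is correct and follows essentially the same route as the paper's proof: the same decomposition of $||\nu_{t+1}-\omega^*||_2^2$ into the deterministic descent term, a vanishing cross term, and a variance term, with the descent term handled by the standard strong-convexity/smoothness/client-drift argument of~\cite{FLConvergenceAnalysis0} (which the paper simply cites rather than spelling out). The only cosmetic difference is that you keep the $\eta_t^2$ prefactor on the variance/error term $\frac{1}{N^2}\sum_n(\delta_n^2+\sigma_n^2)$ — which is in fact what the paper's own derivation yields before the lemma statement drops it — so your bound is, if anything, slightly tighter than the stated one.
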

\begin{proof}
    Under the equivalent scenario we considered, we can follow the analysis presented in~\cite{FLConvergenceAnalysis0}. To this end, we first define $g_t \triangleq \frac{1}{N}\sum_{n=1}^N\nabla\mathcal{L}_t^n(\omega_t^n;\xi_t^n)$, $\bar{g}_t \triangleq \frac{1}{N}\sum_{n=1}^N\nabla\mathcal{L}_t^n(\omega_t^n)$, and $e_t \triangleq \frac{1}{N}\sum_{n=1}^Ne_t^n$. Then, we have $\nu_{t+1} = \omega_t - \eta_t(g_t+e_t)$ and $\mathbbm{E}\left[g_t\right] = \bar{g}_t$. Consequently,
    \begin{equation}
        \begin{split}
            ||\nu_{t+1} - \omega^*||_2^2 = & ||\omega_t - \eta_tg_t- \eta_te_t -\omega^* - \eta_t\bar{g}_t + \eta_t\bar{g}_t||_2^2  \\
            =                              & ||\omega_t - \eta_t\bar{g}_t -\omega^*||_2^2 + \eta_t^2||g_t - \bar{g}_t + e_t||_2^2 + \\
                                           & 2\eta_t\langle\nu_t - \eta_t\bar{g}_t -\omega^*,\bar{g}_t -g_t - e_t\rangle
        \end{split}
    \end{equation}
    We investigate the expectation of each term separately. First,
    \begin{equation}\label{eq:lemma2-1}
        \mathbbm{E}\left[\langle\nu_t - \eta_t\bar{g}_t -\omega^*,\bar{g}_t -g_t - e_t\rangle\right] = 0,
    \end{equation}
    since $\mathbbm{E}\left[g_t\right] = \bar{g}_t$ and $\mathbbm{E}\left[e_t\right] =0$. Then,
    \begin{equation}\label{eq:lemma2-2}
        \begin{split}
            \mathbbm{E} & \left[||g_t - \bar{g}_t + e_t||^2_2\right]                                                                                                   \\
                        & = \frac{1}{N^2}\sum_{n=1}^N\mathbbm{E}\left[||\nabla\mathcal{L}_t^n(\omega_t^n;\xi_t^n)-\nabla\mathcal{L}_t^n(\omega_t^n)+e_t^n||^2_2\right] \\
                        & \leq \frac{1}{N^2}\sum_{n=1}^N\left(\delta_n^2 + \sigma_n^2\right).
        \end{split}
    \end{equation}
    Finally,
    \begin{equation}
        ||\omega_t - \eta_t\bar{g}_t -\omega^*||^2_2 = ||\omega_t -\omega^*||^2_2 - 2\eta_t\langle\omega_t -\omega^*,\bar{g}_t\rangle+ \eta_t^2||\bar{g}_t||^2_2.
    \end{equation}
    By following the same calculations in~\cite{FLConvergenceAnalysis0}, we can obtain
    \begin{equation}\label{eq:lemma2-3}
        \begin{split}
            \mathbbm{E}\left[||\omega_t - \eta_t\bar{g}_t -\omega^*||^2_2\right]\leq & (1-\mu\eta_t)||\omega_t -\omega^*||^2_2 + \\
                                                                                     & 8\eta_t^2(E-1)^2G^2 + 6\eta_t^2L\Gamma,
        \end{split}
    \end{equation}
    where
    \begin{equation}
        \Gamma \triangleq \mathcal{L}^* - \frac{1}{N}\sum_{n=1}^N\mathcal{L}^{n,*}.
    \end{equation}
    Combining~\eqref{eq:lemma2-1},~\eqref{eq:lemma2-2}, and~\eqref{eq:lemma2-3} concludes the proof.
\end{proof}
\begin{lemma}\label{lem:term3}
    For $t\geq0$,
    \begin{equation}
        \mathbbm{E}\left[\langle\omega_{t+1} - \nu_{t+1}, \nu_{t+1} - \omega^*\rangle\right] = 0.
    \end{equation}
\end{lemma}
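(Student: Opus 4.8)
The plan is to split the argument into the two cases that appear in the definition of $\omega_{t+1}^n$ and, in the nontrivial case, to exploit the unbiasedness of the random client sampling together with linearity of the inner product in its first argument.

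First I would dispatch the easy case $t+1\notin\mathcal{C}_E$ or $|\mathcal{I}_{t+1}|=0$. Here the definition gives $\omega_{t+1}^n=\nu_{t+1}^n$ for every $n$, so averaging yields $\omega_{t+1}=\nu_{t+1}$, whence $\omega_{t+1}-\nu_{t+1}=\bm{0}$ and the inner product vanishes pointwise. For the remaining case $t+1\in\mathcal{C}_E$ and $|\mathcal{I}_{t+1}|>0$, I would condition on the $\sigma$-algebra $\mathcal{F}_{t+1}$ generated by all randomness that determines the local iterates $\{\nu_{t+1}^n\}_{n=1}^N$ (the stochastic gradients $\xi_i^n$ and the errors $e_i^n$, together with the past selections), but excluding the client-selection randomness of round $t+1$ itself. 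Under this conditioning both $\nu_{t+1}=\frac{1}{N}\sum_{n=1}^N\nu_{t+1}^n$ and $\omega^*$ are deterministic, so by linearity
\[
\mathbbm{E}\left[\langle\omega_{t+1}-\nu_{t+1},\,\nu_{t+1}-\omega^*\rangle\,\middle|\,\mathcal{F}_{t+1}\right]=\big\langle\mathbbm{E}\left[\omega_{t+1}\mid\mathcal{F}_{t+1}\right]-\nu_{t+1},\;\nu_{t+1}-\omega^*\big\rangle.
\]

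The crux is then to establish $\mathbbm{E}\left[\omega_{t+1}\mid\mathcal{F}_{t+1}\right]=\nu_{t+1}$, i.e.\ that aggregating over the randomly sampled active set is unbiased for the full average. Starting from $\omega_{t+1}=\frac{1}{|\mathcal{I}_{t+1}|}\sum_{n\in\mathcal{I}_{t+1}}\nu_{t+1}^n$ (obtained in the proof of Lemma~\ref{lem:term1}) and conditioning further on $|\mathcal{I}_{t+1}|=k$, the symmetry of the selection gives $\Pr\{n\in\mathcal{I}_{t+1}\mid|\mathcal{I}_{t+1}|=k\}=k/N$, which is precisely the quantity $P_1$ already used in Lemma~\ref{lem:term1}. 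Hence
\[
\mathbbm{E}\left[\omega_{t+1}\,\middle|\,|\mathcal{I}_{t+1}|=k,\mathcal{F}_{t+1}\right]=\frac{1}{k}\sum_{n=1}^N\frac{k}{N}\nu_{t+1}^n=\nu_{t+1},
\]
independently of $k$, so averaging over $k\geq1$ preserves it and $\mathbbm{E}\left[\omega_{t+1}\mid\mathcal{F}_{t+1}\right]=\nu_{t+1}$. The conditional inner product thus collapses to $\langle\bm{0},\,\nu_{t+1}-\omega^*\rangle=0$, and the tower property delivers the unconditional claim.

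The main obstacle is not computational but a matter of bookkeeping about the dependence structure: I must justify that the client-selection at round $t+1$ is independent of the local iterates $\{\nu_{t+1}^n\}$, so that conditioning on $\mathcal{F}_{t+1}$ freezes $\nu_{t+1}-\omega^*$ while leaving the selection uniform, and that the per-client inclusion probability genuinely equals $k/N$ under the adopted policy. Once the unbiasedness $\mathbbm{E}\left[\omega_{t+1}\mid\mathcal{F}_{t+1}\right]=\nu_{t+1}$ is in hand, the conclusion is immediate from linearity of the inner product.
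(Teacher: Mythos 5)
Your proposal is correct and follows essentially the same route as the paper: handle the trivial case $\omega_{t+1}=\nu_{t+1}$, then show $\mathbbm{E}[\omega_{t+1}\mid|\mathcal{I}_{t+1}|=k]=\nu_{t+1}$ via the symmetric inclusion probability $k/N$ (the paper sums over all $\binom{N}{k}$ equally likely subsets, which is the same computation), and conclude by unbiasedness. Your explicit conditioning on the $\sigma$-algebra of the local iterates only makes rigorous a step the paper leaves implicit.
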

\begin{proof}
    When $t+1\notin\mathcal{C}_E$, $\omega_{t+1} = \nu_{t+1}$ by definition. For the case of $t+1\in\mathcal{C}_E$, we first recall that the central server will select client independently with probability $p_n=p$ for $1\leq n\leq N$. When the central server selects exactly $k>0$ clients at time slot $t+1$, we have
    \begin{equation}
        \begin{split}
            \mathbbm{E}\big[\omega_{t+1} \,\big|\, & |\mathcal{I}_{t+1}|=k\big]                                                                                                                                  \\
                                                   & = \sum_{l=1}^{{N\choose k}}\Pr\left\{\mathcal{I}_{t+1} = \mathcal{I}^l_{t+1}\right\}\frac{\sum_{n\in\mathcal{I}^l_{t+1}}\nu_{t+1}^n}{|\mathcal{I}^l_{t+1}|} \\
                                                   & = \frac{1}{{N\choose k}k}\sum_{l=1}^{{N\choose k}}\sum_{n\in\mathcal{I}^l_{t+1}}\nu_{t+1}^n                                                                 \\
                                                   & = \frac{1}{N}\sum_{n=1}^{N}\nu_{t+1}^n = \nu_{t+1},
        \end{split}
    \end{equation}
    where $\mathcal{I}_{t+1}^l$ denotes the $l$-th possible set with $k$ clients. At the same time, $\mathbbm{E}\left[\omega_{t+1}\,\big|\, |\mathcal{I}_{t+1}|=0\right] = \nu_{t+1}$ since $\omega_{t+1} = \nu_{t+1}$ when no client is selected. Finally, when $t+1\in\mathcal{C}_E$,
    \begin{equation}
        \mathbbm{E}\left[\omega_{t+1}\right] \! = \!\sum_{k=0}^N\Pr\left\{|\mathcal{I}_{t+1}|=k\right\}\mathbbm{E}\left[\omega_{t+1}\,\big|\, |\mathcal{I}_{t+1}|=k\right] = \nu_{t+1}.
    \end{equation}
    Combining, we can conclude the proof.
\end{proof}
Combining Lemmas~\ref{lem:term1},~\ref{lem:term2}, and~\ref{lem:term3} concludes the proof.

\bibliographystyle{IEEEtran-no-space}
\bibliography{mybib}

% Generated by IEEEtran.bst, version: 1.14 (2015/08/26)
\begin{thebibliography}{10}
\providecommand{\url}[1]{#1}
\csname url@samestyle\endcsname
\providecommand{\newblock}{\relax}
\providecommand{\bibinfo}[2]{#2}
\providecommand{\BIBentrySTDinterwordspacing}{\spaceskip=0pt\relax}
\providecommand{\BIBentryALTinterwordstretchfactor}{4}
\providecommand{\BIBentryALTinterwordspacing}{\spaceskip=\fontdimen2\font plus
\BIBentryALTinterwordstretchfactor\fontdimen3\font minus \fontdimen4\font\relax}
\providecommand{\BIBforeignlanguage}[2]{{%
\expandafter\ifx\csname l@#1\endcsname\relax
\typeout{** WARNING: IEEEtran.bst: No hyphenation pattern has been}%
\typeout{** loaded for the language `#1'. Using the pattern for}%
\typeout{** the default language instead.}%
\else
\language=\csname l@#1\endcsname
\fi
#2}}
\providecommand{\BIBdecl}{\relax}
\BIBdecl

\bibitem{IOT2014}
A.~Zanella, N.~Bui, A.~Castellani, L.~Vangelista, and M.~Zorzi, ``Internet of things for smart cities,'' \emph{IEEE Internet of Things Journal}, vol.~1, no.~1, pp. 22--32, Feb. 2014.

\bibitem{IOT2023}
Z.~Wei, H.~Qu, Y.~Wang, X.~Yuan, H.~Wu, Y.~Du, K.~Han, N.~Zhang, and Z.~Feng, ``Integrated sensing and communication signals towards 5{G}-{A} and 6{G}: A survey,'' \emph{IEEE Internet of Things Journal}, vol.~10, no.~13, pp. 11\,068--11\,092, Jul. 2023.

\bibitem{cyber2008}
E.~A. Lee, ``Cyber physical systems: Design challenges,'' in \emph{Proc. IEEE International Symposium on Object and Component-Oriented Real-Time Distributed Computing (ISORC)}, May 2008, pp. 363--369.

\bibitem{Haptic2012}
E.~Steinbach, S.~Hirche, M.~Ernst, F.~Brandi, R.~Chaudhari, J.~Kammerl, and I.~Vittorias, ``Haptic communications,'' \emph{Proceedings of the IEEE}, vol. 100, no.~4, pp. 937--956, Apr. 2012.

\bibitem{Linearization}
M.~Hotz and C.~Vogel, ``Linearization of time-varying nonlinear systems using a modified linear iterative method,'' \emph{IEEE Transactions on Signal Processing}, vol.~62, no.~10, pp. 2566--2579, May 2014.

\bibitem{mauroy2020koopman}
A.~Mauroy, Y.~Susuki, and I.~Mezic, \emph{Koopman operator in systems and control}, 1st~ed. Springer, 2020.

\bibitem{koopman1}
I.~Mezi{\'c}, ``Spectral properties of dynamical systems, model reduction and decompositions,'' \emph{Nonlinear Dynamics}, vol.~41, pp. 309--325, Aug. 2005.

\bibitem{Koopman1931}
B.~O. Koopman, ``Hamiltonian systems and transformation in hilbert space,'' \emph{Proceedings of the National Academy of Sciences}, vol.~17, no.~5, pp. 315--318, May 1931.

\bibitem{NNKoopman2018}
A.~Mardt, L.~Pasquali, H.~Wu, and F.~No{\'e}, ``Vampnets for deep learning of molecular kinetics,'' \emph{Nature Communications}, vol.~9, pp. 1--11, Jan. 2018.

\bibitem{DeepKoopman2018}
B.~Lusch, J.~N. Kutz, and S.~L. Brunton, ``Deep learning for universal linear embeddings of nonlinear dynamics,'' \emph{Nature Communications}, vol.~9, pp. 1--10, Nov. 2018.

\bibitem{FedAve2017}
B.~McMahan, E.~Moore, D.~Ramage, S.~Hampson, and B.~A. y~Arcas, ``Communication-efficient learning of deep networks from decentralized data,'' in \emph{Proc. International Conference on Artificial Intelligence and Statistics (AISTATS)}, May 2017, pp. 1273--1282.

\bibitem{FL2019}
Q.~Yang, Y.~Liu, T.~Chen, and Y.~Tong, ``Federated machine learning: Concept and applications,'' \emph{ACM Transactions on Intelligent Systems and Technology}, vol.~10, no.~2, pp. 1--19, Jan. 2019.

\bibitem{FLAsynchronous2019}
C.~Xie, S.~Koyejo, and I.~Gupta, ``Asynchronous federated optimization,'' in \emph{Proc. Annual Workshop on Optimization for Machine Learning (OPT)}, Dec. 2020, pp. 1--11.

\bibitem{FLPrivacy2017}
Y.~Aono, T.~Hayashi, L.~Wang, S.~Moriai \emph{et~al.}, ``Privacy-preserving deep learning via additively homomorphic encryption,'' \emph{IEEE Transactions on Information Forensics and Security}, vol.~13, no.~5, pp. 1333--1345, May 2018.

\bibitem{FLLinear2016}
J.~Kone{\v{c}}n{\`y}, H.~B. McMahan, D.~Ramage, and P.~Richt{\'a}rik, ``Federated optimization: Distributed machine learning for on-device intelligence,'' \emph{arXiv preprint arXiv:1610.02527}, pp. 1--38, Oct. 2016.

\bibitem{tong2007sampled}
S.~Tong and P.~Shi, ``Sampled-data filtering framework for cardiac motion recovery: Optimal estimation of continuous dynamics from discrete measurements,'' \emph{IEEE Transactions on Biomedical Engineering}, vol.~54, no.~10, pp. 1750--1761, Oct. 2007.

\bibitem{medicine4}
S.~Pati, U.~Baid, B.~Edwards, M.~Sheller, S.-H. Wang, G.~A. Reina, P.~Foley, A.~Gruzdev, D.~Karkada, C.~Davatzikos \emph{et~al.}, ``Federated learning enables big data for rare cancer boundary detection,'' \emph{Nature Communications}, vol.~13, pp. 1--17, Dec. 2022.

\bibitem{KKF2016}
A.~Surana and A.~Banaszuk, ``Linear observer synthesis for nonlinear systems using koopman operator framework,'' \emph{IFAC-PapersOnLine}, vol.~49, no.~18, pp. 716--723, Aug. 2016.

\bibitem{KKF2023}
L.~Zeng, S.~H. Sadati, and C.~Bergeles, ``Koopman operator-based extended kalman filter for {C}osserat rod wrench estimation,'' in \emph{Proc. International Symposium on Medical Robotics (ISMR)}, Apr. 2023, pp. 1--7.

\bibitem{KalmanNet}
G.~Revach, N.~Shlezinger, X.~Ni, A.~L. Escoriza, R.~J. Van~Sloun, and Y.~C. Eldar, ``Kalmannet: Neural network aided kalman filtering for partially known dynamics,'' \emph{IEEE Transactions on Signal Processing}, vol.~70, pp. 1532--1547, Mar. 2022.

\bibitem{UKF1997}
S.~J. Julier and J.~K. Uhlmann, ``New extension of the kalman filter to nonlinear systems,'' in \emph{Proc. Signal Processing, Sensor Fusion, and Target Recognition VI}, Jul. 1997, pp. 182--193.

\bibitem{UKF2001}
E.~A. Wan and R.~Van Der~Merwe, \emph{Kalman filtering and neural networks}, 1st~ed. Wiley Online Library, 2001.

\bibitem{liu2010unscented}
Y.~Liu, A.~Yu, J.~Zhu, and D.~Liang, ``Unscented kalman filtering in the additive noise case,'' \emph{Science China Technological Sciences}, vol.~53, pp. 929--941, Mar. 2010.

\bibitem{menegaz2015systematization}
H.~M. Menegaz, J.~Y. Ishihara, G.~A. Borges, and A.~N. Vargas, ``A systematization of the unscented kalman filter theory,'' \emph{IEEE Transactions on Automatic Control}, vol.~60, no.~10, pp. 2583--2598, Oct. 2015.

\bibitem{Smoother2008}
S.~SÄrkkÄ, ``Unscented rauch--tung--striebel smoother,'' \emph{IEEE Transactions on Automatic Control}, vol.~53, no.~3, pp. 845--849, Apr. 2008.

\bibitem{chui2017kalman}
C.~K. Chui and G.~Chen, \emph{Kalman filtering with real-time applications}, 5th~ed. Springer, 2017.

\bibitem{djuric2003particle}
P.~M. Djuric, J.~H. Kotecha, J.~Zhang, Y.~Huang, T.~Ghirmai, M.~F. Bugallo, and J.~Miguez, ``Particle filtering,'' \emph{IEEE Signal Processing Magazine}, vol.~20, no.~5, pp. 19--38, Sep. 2003.

\bibitem{bao2023unified}
F.~Bao, G.~Zhang, and Z.~Zhang, ``A unified filter method for jointly estimating state and parameters of stochastic dynamical systems via the ensemble score filter,'' \emph{arXiv preprint arXiv:2312.10503}, pp. 1--24, Dec. 2023.

\bibitem{ghosh2024danse}
A.~Ghosh, A.~Honor{\'e}, and S.~Chatterjee, ``Danse: Data-driven non-linear state estimation of model-free process in unsupervised learning setup,'' \emph{IEEE Transactions on Signal Processing}, vol.~72, pp. 1824--1838, Mar. 2024.

\bibitem{Koopman2021}
E.~Kaiser, J.~N. Kutz, and S.~L. Brunton, ``Data-driven discovery of koopman eigenfunctions for control,'' \emph{Machine Learning: Science and Technology}, vol.~2, no.~3, pp. 1--31, Jun. 2021.

\bibitem{FirstDMD}
P.~J. Schmid, ``Dynamic mode decomposition of numerical and experimental data,'' \emph{Journal of Fluid Mechanics}, vol. 656, pp. 5--28, Aug. 2010.

\bibitem{takens2006detecting}
F.~Takens, ``Detecting strange attractors in turbulence,'' in \emph{Dynamical Systems and Turbulence, Warwick 1980: proceedings of a symposium held at the University of Warwick 1979/80}, Sep. 1981, pp. 366--381.

\bibitem{brunton2017chaos}
S.~L. Brunton, B.~W. Brunton, J.~L. Proctor, E.~Kaiser, and J.~N. Kutz, ``Chaos as an intermittently forced linear system,'' \emph{Nature Communications}, vol.~8, pp. 1--9, May 2017.

\bibitem{kamb2020time}
M.~Kamb, E.~Kaiser, S.~L. Brunton, and J.~N. Kutz, ``Time-delay observables for koopman: Theory and applications,'' \emph{SIAM Journal on Applied Dynamical Systems}, vol.~19, no.~2, pp. 886--917, Feb. 2020.

\bibitem{brunton2016koopman}
S.~L. Brunton, B.~W. Brunton, J.~L. Proctor, and J.~N. Kutz, ``Koopman invariant subspaces and finite linear representations of nonlinear dynamical systems for control,'' \emph{PLOS One}, vol.~11, no.~2, pp. 1--19, Feb. 2016.

\bibitem{abraham2019active}
I.~Abraham and T.~D. Murphey, ``Active learning of dynamics for data-driven control using koopman operators,'' \emph{IEEE Transactions on Robotics}, vol.~35, no.~5, pp. 1071--1083, Oct. 2019.

\bibitem{mezic2013analysis}
I.~Mezi{\'c}, ``Analysis of fluid flows via spectral properties of the koopman operator,'' \emph{Annual Review of Fluid Mechanics}, vol.~45, pp. 357--378, Jan. 2013.

\bibitem{girgis2022predictive}
A.~M. Girgis, H.~Seo, J.~Park, M.~Bennis, and J.~Choi, ``Predictive closed-loop remote control over wireless two-way split koopman autoencoder,'' \emph{IEEE Internet of Things Journal}, vol.~9, no.~23, pp. 23\,285--23\,301, Dec. 2022.

\bibitem{shi2022deep}
H.~Shi and M.~Q.-H. Meng, ``Deep koopman operator with control for nonlinear systems,'' \emph{IEEE Robotics and Automation Letters}, vol.~7, no.~3, pp. 7700--7707, Jul. 2022.

\bibitem{robbins1951stochastic}
H.~Robbins and S.~Monro, ``A stochastic approximation method,'' \emph{The Annals of Mathematical Statistics}, vol.~22, no.~3, pp. 400--407, Sep. 1951.

\bibitem{FLConvergenceAnalysis0}
X.~Li, K.~Huang, W.~Yang, S.~Wang, and Z.~Zhang, ``On the convergence of {F}ed{A}vg on non-{IID} data,'' in \emph{Proc. International Conference on Learning Representations (ICLR)}, May 2020, pp. 1--26.

\bibitem{FLConvergenceAnalysis1}
Y.~Zhang, J.~C. Duchi, and M.~J. Wainwright, ``Communication-efficient algorithms for statistical optimization,'' \emph{Journal of Machine Learning Research}, vol.~14, no. 104, pp. 3321--3363, Nov. 2013.

\bibitem{FLConvergenceAnalysis2}
S.~U. Stich, ``Local {SGD} converges fast and communicates little,'' in \emph{Proc. International Conference on Learning Representations (ICLR)}, May 2019, pp. 1--19.

\bibitem{song2015learning}
S.~Song, K.~Chaudhuri, and A.~Sarwate, ``Learning from data with heterogeneous noise using sgd,'' in \emph{Proc. International Conference on Artificial Intelligence and Statistics (AISTATS)}, May 2015, pp. 894--902.

\bibitem{fang2022robust}
X.~Fang and M.~Ye, ``Robust federated learning with noisy and heterogeneous clients,'' in \emph{Proc. IEEE/CVF Conference on Computer Vision and Pattern Recognition (CVPR)}, Jun. 2022, pp. 10\,062--10\,071.

\bibitem{pmlr-v80-yin18a}
D.~Yin, Y.~Chen, R.~Kannan, and P.~Bartlett, ``{B}yzantine-robust distributed learning: Towards optimal statistical rates,'' in \emph{Proc. International Conference on Machine Learning (ICML)}, Jul. 2018, pp. 5650--5659.

\bibitem{Lorenz1963}
E.~N. Lorenz, ``Deterministic nonperiodic flow,'' \emph{Journal of the Atmospheric Sciences}, vol.~20, no.~2, pp. 130--141, Mar. 1963.

\bibitem{kingma2014adam}
D.~P. Kingma and J.~Ba, ``Adam: {A} method for stochastic optimization,'' in \emph{Proc. International Conference on Learning Representations (ICLR)}, May 2015, pp. 1--15.

\bibitem{doi2010computational}
S.~Doi, J.~Inoue, Z.~Pan, and K.~Tsumoto, \emph{Computational electrophysiology}, 1st~ed. Springer Science \& Business Media, 2010.

\bibitem{myers2020low}
A.~D. Myers, J.~R. Tempelman, D.~Petrushenko, and F.~A. Khasawneh, ``Low-cost double pendulum for high-quality data collection with open-source video tracking and analysis,'' \emph{HardwareX}, vol.~8, pp. 1--23, Oct. 2020.

\bibitem{xu2023asynchronous}
C.~Xu, Y.~Qu, Y.~Xiang, and L.~Gao, ``Asynchronous federated learning on heterogeneous devices: A survey,'' \emph{Computer Science Review}, vol.~50, pp. 1--15, Nov. 2023.

\bibitem{konevcny2016federated}
J.~Kone{\v{c}}n{\`y}, H.~B. McMahan, F.~X. Yu, P.~Richt{\'a}rik, A.~T. Suresh, and D.~Bacon, ``Federated learning: Strategies for improving communication efficiency,'' in \emph{Proc. NIPS Workshop on Private Multi-Party Machine Learning}, Dec. 2016, pp. 1--10.

\bibitem{amari1998natural}
S.-I. Amari, ``Natural gradient works efficiently in learning,'' \emph{Neural Computation}, vol.~10, no.~2, pp. 251--276, Feb. 1998.

\bibitem{reddi2020adaptive}
S.~Reddi, Z.~Charles, M.~Zaheer, Z.~Garrett, K.~Rush, J.~Kone{\v{c}}n{\`y}, S.~Kumar, and H.~B. McMahan, ``Adaptive federated optimization,'' in \emph{Proc. International Conference on Learning Representations (ICLR)}, May 2021, pp. 1--38.

\bibitem{wei2022federated}
X.~Wei and C.~Shen, ``Federated learning over noisy channels: Convergence analysis and design examples,'' \emph{IEEE Transactions on Cognitive Communications and Networking}, vol.~8, no.~2, pp. 1253--1268, Jun. 2022.

\end{thebibliography}

\end{document}